\newcommand{\dl}[1]{\ensuremath{\mathcal{#1}}\xspace}
\newcommand{\dlF}{\dl F}
\newcommand{\dlK}{\dl K}
\newcommand{\dlT}{\dl T}
\newcommand{\dlA}{\dl A}
\newcommand{\dlS}{\dl S}
\newcommand{\dlM}{\dl M}
\newcommand{\dlX}{\dl X}
\newcommand{\dlI}{\dl I}
\newcommand{\kb}{\ensuremath{\left< \dlT, \dlA \right>}\xspace}
\newcommand{\core}{\ensuremath{\left< \dlS, \dlM \right>}\xspace}
\newcommand{\ccKB}{\ensuremath{\dlK = \left< \dlT, \dlA, \dlS, \dlM 
\right>}\xspace}
\newcommand{\partccKB}{\ensuremath{\dlK = \left< \dlT, \dlA, \dlS, \partitionedM 
\right>}\xspace}
\newcommand{\dlLiteF}{DL-Lite$^\dlF$\xspace}
\newcommand{\dlLiteA}{DL-Lite$^\dlA$\xspace}
\newcommand{\ALCOIQ}{\ensuremath{\dl{ALCOIQ}}\xspace}
\newcommand{\partitionedM}{\ensuremath{\{\mathcal M_i\}_{i \in \declaredResourcesNodes}}}
\newcommand{\concepts}{\ensuremath{\mathbf{C}}\xspace}
\newcommand{\specConcepts}{\ensuremath{\concepts^\dlS}\xspace}
\newcommand{\openConcepts}{\ensuremath{\concepts^\dlK}\xspace}
\newcommand{\roles}{\ensuremath{\mathbf{R}}\xspace}
\newcommand{\specRoles}{\ensuremath{\roles^\dlS}\xspace}
\newcommand{\openRoles}{\ensuremath{\roles^\dlK}\xspace}
\newcommand{\individuals}{\ensuremath{\mathbf{I}}\xspace}
\newcommand{\modelNodes}{\ensuremath{\individuals^\dlM}\xspace}
\newcommand{\boundaryNodes}{\ensuremath{\individuals^B}\xspace}
\newcommand{\openIndividuals}{\ensuremath{\individuals^\dlK}\xspace}
\newcommand{\declaredResourcesNodes}{\ensuremath{\individuals^{r}}\xspace}
\newcommand{\genericAlphabet}{\ensuremath{\mathbf{X}}\xspace}
\title{Actions over Core-closed Knowledge Bases}
\author{ Claudia Cauli \inst{1,2}\thanks{This 
work was done prior to joining Amazon.}
	\and Magdalena Ortiz \inst{3}
	\and Nir Piterman \inst{1}\thanks{Supported by ERC Consolidator grant D-SynMA (No. 772459).}}
\institute{University of Gothenburg \and Amazon 
Web Services \and TU Wien }
\begin{document}
\maketitle

\begin{abstract}
	We present new results on the application of semantic- and 
knowledge-based reasoning techniques to the analysis of cloud 
deployments. 
In particular, to the security of \emph{Infrastructure as Code} 
configuration files, encoded as description logic knowledge bases. We 
introduce an action language to model \emph{mutating actions}; that 
is, actions that change the structural configuration of a given 
deployment by adding, modifying, or deleting resources. 
We mainly focus on two problems: the problem of determining whether 
the execution of an action, no matter the parameters passed to it, 
will not cause the violation of some security requirement 
(\emph{static verification}), and the problem of finding sequences of 
actions that would lead the deployment to a state where (un)desirable 
properties are (not) satisfied (\emph{plan existence} and \emph{plan 
synthesis}). For all these problems, we provide definitions, 
complexity results, and decision procedures.
\end{abstract}

\section{Introduction}

The use of automated reasoning techniques to analyze properties of 
cloud infrastructure is gaining increasing attention 
\cite{BBCDGLRTV18,BBCDGHKKKK19,BCCDGHJMP20,BBBBCGJLMM20,Cook18}.
Despite that, more effort needs to be put into the modeling and 
verification of generic security requirements over cloud 
infrastructure pre-deployment. The availability of formal techniques, 
providing strong security guarantees, would assist complex 
system-level analyses such as threat modeling and data flow, 
which now require considerable time, manual intervention, and expert 
domain knowledge.

We continue our research on the application of semantic-based and  
knowledge-based reasoning techniques to cloud deployment 
\emph{Infrastructure as Code} configuration files. 
In~\cite{CauliLPT21}, we reported on our experience using expressive 
description logics to model and reason about Amazon Web Services' 
proprietary Infrastructure as Code framework (AWS CloudFormation). We 
used the rich constructs of these logics to encode domain knowledge, 
simulate closed-world reasoning, and express mitigations and 
exposures to security threats. Due to the high complexity of basic 
tasks~\cite{Tob99,BHLS17}, we found reasoning in such a framework to 
be not efficient at cloud scale.
In~\cite{CauliOP21}, we introduced \emph{core-closed knowledge 
bases}{---}a lightweight description logic combining closed- and 
open-world reasoning, that is tailored to model cloud infrastructure 
and efficiently query its security properties.
Core-closed knowledge bases enable partially-closed predicates whose
interpretation is closed over a \emph{core} part of the knowledge base
but open elsewhere.
To encode potential exposure to security threats, we studied the  
query satisfiability problem and (together with the usual query 
entailment problem) applied it to a new class of conjunctive queries 
that we called \textsc{Must}/\textsc{May} queries.
We were able to answer such queries over core-closed knowledge bases
in \textsc{LogSpace} in data complexity and \textsc{NP} in combined
complexity, improving on the required \textsc{NExptime} complexity 
for satisfiability over \ALCOIQ (used in \cite{CauliLPT21}).

Here, we enhance the quality of the analyses done over pre-deployment 
artifacts, giving users and practitioners additional precise insights 
on the impact of potential changes, fixes, and general improvements 
to their cloud projects.
To do so, we enrich core-closed knowledge bases with the notion of 
\emph{core-completeness}, which is needed to ensure that updates are 
consistent. We define the syntax and semantics of an action language 
that is expressive enough to encode \emph{mutating} API calls, i.e., 
operations that change a cloud deployment configuration by creating, 
modifying, or deleting existing resources. As part of our effort to 
improve the quality of automated analysis, we also provide relevant 
reasoning tools to identify and predict the consequences of these 
changes. To this end, we consider procedures that determine whether 
the execution of a mutating operation always preserves given 
properties (\emph{static verification}); determine whether there 
exists a sequence of operations that would lead a deployment to a 
configuration meeting certain requirements (\emph{plan existence}); 
and find such sequences of operations (\emph{plan synthesis}).

The paper is organized as follows.
In~\cref{sec:background} we provide background on core-closed 
knowledge bases, conjunctive queries, and \textsc{Must}/\textsc{May} 
queries. 
In \cref{sec:coreComp} we motivate and introduce the notion of 
\emph{core-completeness}.
In~\cref{sec:act} we define the action language. 
In~\cref{sec:sv} we describe the static verification problem and 
characterize its complexity.
In~\cref{sec:ts} we address the planning problem and concentrate on 
the synthesis of minimal plans satisfying a given requirement 
expressed using \textsc{Must}/\textsc{May} queries. 
We discuss related works in~\cref{sec:rel} and conclude 
in~\cref{sec:conc}.

\section{Background}\label{sec:background}

Description logics (DLs) are a family of logics for encoding  
knowledge in terms of concepts, roles, and individuals; analogous to 
first-order logic unary predicates, binary predicates, and constants, 
respectively.
Standard DL knowledge bases (KBs) have a set of axioms, called 
\emph{TBox}, and a set of assertions, called \emph{ABox}. The TBox 
contains axioms that relate to concepts and roles. The ABox contains 
assertions that relate individuals to concepts and pairs of 
individuals to roles. KBs are usually interpreted under the 
open-world assumption, meaning that the asserted facts are not 
assumed to be complete.

\paragraph{Core-closed Knowledge Bases}
In~\cite{CauliOP21}, we introduced {core-closed 
knowledge bases} (ccKBs) as a suitable description logic formalism to 
encode cloud deployments. The main characteristic of ccKBs is to 
allow for a combination of open- and closed-world reasoning that 
ensures tractability.
A \dlLiteF ccKB is the tuple \ccKB built from the standard knowledge 
base \kb and the \emph{core} system \core. 
The former encodes incomplete terminological and assertional 
knowledge.
The latter is, in turn, 
composed of two parts: \dlS, containing axioms that encode the core 
structural specifications, and \dlM, containing positive 
concept and role assertions that encode the core configuration. 
Syntactically, \dlM is similar to an ABox but, semantically, is 
assumed to be complete with respect to the specifications 
in \dlS.
The ccKB \dlK is defined over the alphabets \concepts (of concepts), 
\roles (of roles), and \individuals (of individuals), all 
partitioned into an open subset and a partially-closed subset. 
That is, the set of concepts is partitioned into the open concepts 
\openConcepts and the closed (specification) concepts \specConcepts;
the set of roles is partitioned into open roles 
\openRoles and closed (specification) roles \specRoles; and 
the set of individuals is partitioned into open individuals 
\openIndividuals and closed (model) individuals \modelNodes.
We call \specConcepts and \specRoles core-closed predicates, or 
partially-closed predicates, as their 
extension is closed 
over the core domain \modelNodes and open otherwise. In contrast, we 
call \openConcepts 
and \openRoles open predicates.
The syntax of concept and role expressions in 
\dlLiteF~\cite{ArtaleCKZ09,CalvaneseGLLR07}
 is as follows: 
\[
\mathsf{B} ::= \bot \mid \mathsf{A} \mid \exists\mathsf{p}
\]
where \textsf{A} denotes a concept name and \textsf{p} is either a 
role name \textsf{r} or its inverse \textsf{r}$^-$. The syntax of 
axioms provides for the 
three following axioms: 
\[
\mathsf{B}^1 \sqsubseteq \mathsf{B}^2, \qquad
\mathsf{B}^1 \sqsubseteq \neg\mathsf{B}^2, \qquad
(\mathsf{funct}\ \mathsf{p}),
\]
respectively called: \emph{positive inclusion} axioms, \emph{negative 
inclusion} axioms, and \emph{functionality} axioms.
From now on, we denote symbols from the alphabet 
$\genericAlphabet^\dlX$ with the 
subscript \dlX, and symbols from the generic alphabet 
$\genericAlphabet$ with no 
subscript.
In core-closed knowledge bases, axioms and assertions fall into the 
scope of a different set depending on the predicates and individuals 
that they refer to, according to the set definitions below.
\begin{align*}
\dlM \subseteq&\ \{\ 
\mathsf A_\dlS(a_\dlM),\ 
\mathsf R_\dlS(a_\dlM,a),\ 
\mathsf R_\dlS(a,a_\dlM)\ \}\\
\dlA \subseteq&\ \{\ 
\mathsf A_\dlK(a_\dlK),\ 
\mathsf R_\dlK(a_\dlK,b_\dlK),\ 
\mathsf{A}_\dlS(a_\dlK),\  
\mathsf R_\dlS(a_\dlK, b_\dlK) \ 
\}\\
\dlS \subseteq&\ \{\ 
\mathsf B^1_\dlS\!\sqsubseteq\! \mathsf B^2_\dlS,\ 
\mathsf B^1_\dlS\! \sqsubseteq\! \neg\mathsf  
B^2_\dlS,\  
\mathsf{Func}(\mathsf P_\dlS)\ \}\\
\dlT \subseteq&\ \{\ 
\mathsf B^1\!\sqsubseteq\! \mathsf B^2_\dlK,\ 
\mathsf B^1\! \sqsubseteq\! \neg\mathsf  B^2_\dlK,\  
\mathsf{Func}(\mathsf P_\dlK) \ \}
\end{align*}
As mentioned earlier, \dlM-assertions are assumed to be complete and 
consistent with respect to the terminological knowledge given in 
\dlS; whereas the usual open-world assumption is made for 
\dlA-assertions.
The semantics of a \dlLiteF core-closed KB is given in terms of 
interpretations \dlI, consisting of a non-empty domain $\Delta^\dlI$ 
and an interpretation function $\cdot^\dlI$. The latter assigns to 
each concept \textsf{A} a subset $\mathsf A^\dlI$ of $\Delta^\dlI$, 
to each role \textsf{r} a subset $\mathsf r^\dlI$ of 
$\Delta^\dlI\times\Delta^\dlI$, and to each 
individual $a$ a node $a^\dlI$ in $\Delta^\dlI$, and it is extended 
to concept expressions in the usual way.
An interpretation \dlI is a model of an inclusion axiom $\mathsf
B_1\sqsubseteq \mathsf B_2$ if $\mathsf B_1^\dlI \subseteq \mathsf
B_2^\dlI$. 
An interpretation \dlI is a model of a membership assertion 
$\mathsf A(a)$, (resp. $\mathsf r(a,b)$) if 
$a^\dlI\!\in\!\mathsf A^\dlI$ (resp.  $(a^\dlI,b^\dlI)\!\in\!\mathsf
r^\dlI$).
We say that \dlI models \dlT, \dlS, and \dlA if it models all axioms 
or assertions contained therein.
We say that \dlI models \dlM, denoted 
$\dlI\models^{\mathsf{CWA}}\dlM$, when it models an 
\dlM-assertion $f$ \emph{if and only if } $f\!\in\!\dlM$.
Finally, \dlI models \dlK if it models \dlT, \dlS, \dlA, 
and \dlM. When \dlK has at least one model, we say that \dlK is 
satisfiable.

In the application presented in~\cite{CauliLPT21}, 
description logic KBs are used to encode machine-readable 
deployment files containing multiple resource declarations. 
Every resource declaration has an underlying tree structure, whose 
leaves can potentially link to the roots of other resource 
declarations. Let \declaredResourcesNodes$\subseteq$ \modelNodes be 
the set of all resource nodes, we encode their resource declarations 
in \dlM, and formalize the resulting forest structure by partitioning 
\dlM into multiple subsets \partitionedM, each representing a tree of 
assertions rooted at a resource node $i$ (we generally refer to
constants in $\dlM$ as nodes). For the purpose of this 
work, we will refer to core-closed knowledge bases where \dlM is 
partitioned as described; that is, ccKBs such that \partccKB.

\paragraph{Conjunctive Queries}
A \emph{conjunctive query} (CQ) is an existentially-quantified  
formula $q[\vec{x}]$ of the form $\exists \vec{y}.\textit{conj} 
(\vec{x},\vec{y})$, where \emph{conj} is a conjunction of positive 
atoms and potentially inequalities.
A \emph{union of conjunctive queries} (UCQ) is a disjunction of CQs. 
The variables in $\vec{x}$ are called \emph{answer variables}, those 
in $\vec{y}$ are the existentially-quantified \emph{query variables}.
A tuple $\vec{c}$ of constants appearing in the knowledge base \dlK 
is an answer to $q$ if for all interpretations \dlI model of \dlK we 
have $\dlI\models q[\vec{c}]$.
We call these tuples the \emph{certain answers} of \emph{q} over
\dlK, denoted $ans(\dlK,q)$, and the problem of testing whether a
tuple is a certain answer \emph{query entailment}.
A tuple $\vec{c}$ of constants appearing in \dlK satisfies $q$ if
there exists an interpretation \dlI model of \dlK such that
$\dlI\models q[\vec{c}]$.
We call these tuples the \emph{sat answers} of \emph{q} over \dlK,
denoted $sat{-}ans(\dlK,q)$, and the problem of testing whether a
given tuple is a sat answer \emph{query satisfiability}.

\paragraph{\textsc{Must}/\textsc{May} Queries}
A \textsc{Must}/\textsc{May} query $\psi$ (\cite{CauliOP21}) is a 
Boolean combination of 
nested UCQs in the scope of a \textsc{Must} or a \textsc{May} 
operator as follows: 
\[
\psi ::= 
\neg \psi\  \mid\ 
\psi_1 \wedge \psi_2 \mid\  
\psi_1 \vee \psi_2 \mid\  
\textsc{Must}\ \varphi \mid\  
\textsc{May}\ \varphi_{\not\approx}
\]
where $\varphi$ and $\varphi_{\not\approx}$ are unions of conjunctive 
queries potentially containing inequalities. The reasoning needed for 
answering the nested queries can be decoupled from the reasoning 
needed to answer the higher-level formula: nested queries 
$\textsc{Must}\ \varphi$ are reduced to conjunctive 
query entailment, and nested queries $\textsc{May}\ 
\varphi_{\not\approx}$ are reduced to 
conjunctive query satisfiability. We denote by $\mathsf{ANS}(\psi, 
\dlK)$ the answers of a \textsc{Must}/\textsc{May} query $\psi$ over 
the core-closed knowledge base \dlK.

\section{Core-complete Knowledge Bases}\label{sec:coreComp}

The algorithm \textsf{Consistent} presented 
in~\cite{CauliOP21} computes satisfiability of 
\dlLiteF core-closed knowledge bases relying on the assumption that 
\dlM is complete and consistent with respect to \dlS. Such an 
assumption 
effectively means that the information contained in \dlM is 
\emph{explicitly} present and \emph{cannot be completed by 
inference}. The algorithm relies on the existence of a theoretical 
object, the canonical interpretation, in which missing assertions  
can always be introduced when they are logically implied by the 
positive inclusion axioms. As a matter of fact, positive inclusion 
axioms are not even included in the inconsistency formula built for 
the satisfiability check, as it is proven that the canonical 
interpretation always satisfies them (\cite{CauliOP21}, 
Lemma 3). When the assumption that \dlM is consistent with respect to 
\dlS is dropped, the algorithm \textsf{Consistent} becomes 
insufficient to check satisfiability.
We illustrate this with an example.
\begin{example}[Required Configuration]\label{example:requiredConfig}
Let us consider the axioms constraining the AWS resource type 
$\mathsf{S3\!\!::\!\!Bucket}$. In particular, the 
\dlS-axiom $\mathsf{S3\!\!::\!\!Bucket} \sqsubseteq \exists 
\mathsf{loggingConfiguration}$ prescribing that all buckets must have 
a \emph{required} logging configuration. For a set $\dlM = \{
\mathsf{S3\!\!::\!\!Bucket}(b)\}$, according to the 
partially-closed semantics of core-closed knowledge bases, the 
absence of an assertion $\mathsf{loggingConfiguration}(b,x)$, for 
some $x$, is interpreted as the assertion being false in \dlM, which 
is therefore not consistent with respect to \dlS. 
However, the algorithm 
\textsf{Consistent} will check the \emph{lts} interpretation of \dlM 
for an empty formula (as there are no negative inclusion or 
functionality axioms) and return \emph{true}.
\end{example}
In essence, the algorithm \textsf{Consistent} does not compute the 
full satisfiability of the whole core-closed knowledge base, but only 
of its open part.
Satisfiability of \dlM with respect to the positive inclusion 
axioms in \dlS needs to be checked separately. 
We introduce a new notion to denote when a set \dlM is complete with 
respect to \dlS that is distinct from the notion of consistency. 
Let $\ccKB$ be a \dlLiteF core-closed knowledge base; we say that 
$\dlK$ is \emph{core-complete} when \dlM models \emph{all} positive 
inclusion axioms in \dlS under a closed-world assumption; we say that 
\dlK is \emph{open-consistent} when \dlM and \dlA model all negative 
inclusion and functionality axioms in \dlK's negative inclusion 
closure. 
Finally, we say that \dlK is 
\emph{fully satisfiable} when is both \emph{core-complete} and 
\emph{open-consistent}.

\begin{lemma}
In order to check \emph{full satisfiability} of a \dlLiteF 
core-closed KB, one simply needs to check if \dlK is 
\emph{core-complete} (that is, if \dlM models all \emph{positive 
axioms} in $\dlS$ under a closed-world assumption) and if \dlK is 
\emph{open-consistent} (that is, to run the algorithm 
\emph\textsf{Consistent}).
\end{lemma}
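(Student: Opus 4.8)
The plan is to show that the existence of a core-closed model of \dlK decomposes into two independent conditions, one for each family of axioms: the \emph{positive inclusion} axioms of \dlS are governed entirely by core-completeness, while the \emph{negative inclusion} and \emph{functionality} axioms are governed entirely by open-consistency and hence by \textsf{Consistent}. I would prove the two directions of the equivalence ``\dlK has a model respecting the closed-world reading of \dlM if and only if \dlK is core-complete and open-consistent,'' and then observe that core-completeness is a finite closed-world check over \dlM, while open-consistency is exactly what \textsf{Consistent} decides, by its correctness established in \cite{CauliOP21}.

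For necessity, suppose \dlK is fully satisfiable, witnessed by an interpretation \dlI with $\dlI\models^{\mathsf{CWA}}\dlM$. To see that \dlK is core-complete, take any positive inclusion $\mathsf B^1_\dlS\sqsubseteq\mathsf B^2_\dlS$ in \dlS and any model node $a_\dlM$ satisfying its antecedent. Since the predicates involved are core-closed (closed over \modelNodes) and $a_\dlM$ lies in \modelNodes, the membership or successor forced by $\mathsf B^2_\dlS$ can only be realised through an assertion that is itself an \dlM-assertion; the closed-world reading of \dlM then forces that assertion to already belong to \dlM, exactly as in \cref{example:requiredConfig}. Hence \dlM models the axiom under the closed-world assumption, so \dlK is core-complete. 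Moreover \dlI is in particular a model of every negative inclusion and functionality axiom in the negative-inclusion closure of \dlK, so \textsf{Consistent} detects no violation and returns \emph{true}; thus \dlK is open-consistent.

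For sufficiency, suppose \dlK is core-complete and open-consistent. I would build the canonical interpretation of \cite{CauliOP21} and verify that it is a model. By \cite{CauliOP21}, Lemma 3, the canonical interpretation satisfies \emph{all} positive inclusion axioms: the open ones are satisfied by the usual introduction of fresh witnesses, and the core-closed ones restricted to \modelNodes are satisfied precisely because core-completeness guarantees that the required \dlM-assertions are present. Because \textsf{Consistent} returned \emph{true}, no negative inclusion or functionality axiom of the negative-inclusion closure is violated. Finally, the canonical construction extends \dlM conservatively and introduces witnesses only outside \modelNodes, so it respects the closed-world reading of \dlM. Hence it is a model of \dlK, and \dlK is satisfiable.

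The main obstacle is justifying the clean separation between the two checks: one must argue that the witnesses introduced for the open positive inclusions can never collide with a negative inclusion or functionality constraint beyond what the negative-inclusion closure already records, and that they stay outside \modelNodes so that they do not disturb the closed-world reading of \dlM. This rests on the forest-shaped, chase-like structure of canonical models in \dlLiteF together with the core-closed restriction on where specification predicates may relate model nodes. I would discharge it by appealing to \cite{CauliOP21}, Lemma 3, for the positive part and to the correctness of the negative-inclusion closure for the negative part, which together show that the two families of axioms are satisfied independently, so that no single combined satisfiability test beyond the two stated checks is required.
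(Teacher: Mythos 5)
Your proposal is correct and takes essentially the same route as the paper: both arguments hinge on the canonical interpretation of \cite{CauliOP21}, whose construction (rules \textbf{c5-c8}) never applies $PI_\dlS$ axioms to model nodes, so that core-completeness is precisely what is needed to make $can(\dlK)$ a model of $PI_\dlS$, while all negative inclusion and functionality axioms (including those of \dlS) are delegated to the correctness of \textsf{Consistent} on $db(\dlA)$ and $lts(\dlM)$. The difference is only presentational: you prove the two directions of the equivalence explicitly, whereas the paper phrases the same decomposition as an analysis of which lemmas of \cite{CauliOP21} survive once the assumption that \dlM is consistent with respect to \dlS is dropped.
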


\begin{proof}
Dropping the assumption that \dlM is consistent w.r.t. \dlS causes 
Lemma 3 from~\cite{CauliOP21} to fail. In particular, 
the canonical interpretation of \dlK, $can(\dlK)$, would still be a 
model of $PI_\dlT$, $\dlA$, and $\dlM$, but may \emph{not} be a model 
of $PI_\dlS$. This is due to the construction of the canonical model 
that is based on the notion of applicable axioms. In rules 
\textbf{c5-c8} of~\cite{CauliOP21} Definition 1, axioms 
in $PI_\dlS$ are defined as applicable to assertions involving open 
nodes $a_\dlK$ but \emph{not} to model nodes $a_\dlM$ in 
$\modelNodes$. As a result, if the implications of such axioms on 
model nodes are not included in \dlM itself, then they will not be 
included in $can(\dlK)$ either, and $can(\dlK)$ will not be a model 
of $PI_\dlS$. On the other hand, one can easily verify that Lemmas 
1,2,4,5,6,7 and Corollary 1 would still hold as they do not rely on 
the assumption. However, since it is not guaranteed anymore that \dlM 
satisfies all positive inclusion axioms from \dlS, the \emph{if} 
direction of~\cite{CauliOP21} Theorem 1 does not hold 
anymore: there can be an unsatisfiable ccKB \dlK such that 
$db(\dlA)\cup lts(\dlM) \models cln(\dlT\cup\dlS),\dlA,\dlM$. For 
instance, the knowledge base from~\cref{example:requiredConfig}. We 
also note that the negative inclusion and functionality axioms from 
\dlS will be checked anyway by the consistency formula, both on 
$db(\dlA)$ and on $lts(\dlM)$.
\end{proof}

\begin{lemma}\label{lemma:fullSatComplexity}
Checking whether a \dlLiteF core-closed knowledge base is 
\emph{core-complete} can be done in polynomial time in \dlM. As a 
consequence, checking full satisfiability is also done in polynomial 
time in~\dlM.
\end{lemma}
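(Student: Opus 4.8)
The plan is to make the notion of core-completeness operational by a direct inspection of \dlM, bound the cost of that inspection, and then combine it with the preceding lemma to obtain the claim on full satisfiability. First I would reformulate the definition extensionally. Reading \dlM under the closed-world assumption pins down the extension of every specification predicate to exactly what is asserted, so for a basic concept $\mathsf B$ over \dlS and a node $a$ occurring in \dlM I would set the closed extension $\mathsf B^\dlM$ by $a \in \mathsf A^\dlM$ iff $\mathsf A(a) \in \dlM$; $a \in (\exists \mathsf r)^\dlM$ iff $\mathsf r(a,b) \in \dlM$ for some $b$; $a \in (\exists \mathsf r^-)^\dlM$ iff $\mathsf r(b,a) \in \dlM$ for some $b$; and $\bot^\dlM = \emptyset$. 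With this reading, \dlK is core-complete precisely when $(\mathsf B^1_\dlS)^\dlM \subseteq (\mathsf B^2_\dlS)^\dlM$ for every positive inclusion $\mathsf B^1_\dlS \sqsubseteq \mathsf B^2_\dlS$ of \dlS. A useful observation here is that this containment need only be tested axiom by axiom, with no closure of \dlS: if \dlM satisfies $\mathsf B^1 \sqsubseteq \mathsf B^2$ and $\mathsf B^2 \sqsubseteq \mathsf B^3$ extensionally, then $(\mathsf B^1)^\dlM \subseteq (\mathsf B^2)^\dlM \subseteq (\mathsf B^3)^\dlM$, so every implied positive inclusion is satisfied automatically.

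For the complexity of the core-completeness check I would argue as follows. The set of nodes occurring in \dlM has size linear in \dlM, and for a fixed specification the number of positive inclusions of \dlS is constant (in combined terms, at most $|\dlS|$). For each such axiom and each node $a$ of \dlM, deciding $a \in (\mathsf B^1_\dlS)^\dlM$ and, when it holds, $a \in (\mathsf B^2_\dlS)^\dlM$ reduces to a lookup in \dlM{---}a search for a concept assertion, or for a matching role assertion in the appropriate direction{---}which is linear in \dlM, or constant after building indices over \dlM. Iterating over the polynomially many pairs of axiom and node therefore decides core-completeness in polynomial time in \dlM.

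For the ``as a consequence'' part I would invoke the preceding lemma, by which \dlK is fully satisfiable iff it is both core-complete and open-consistent. Core-completeness is decided in polynomial time in \dlM by the procedure above, and open-consistency is decided by the algorithm \textsf{Consistent} of \cite{CauliOP21}, which runs in polynomial time in the data. As the conjunction of two polynomial-time tests is still polynomial, full satisfiability is decided in polynomial time in \dlM. The one step that I expect to require genuine care is the extensional reformulation: I must verify that the closed reading of $\exists \mathsf p$ faithfully captures the closed-world semantics of \dlM, in particular that role assertions of \dlM whose endpoint falls outside \modelNodes are handled correctly and that the nodes quantified over are exactly those appearing in \dlM, so that the extensional test is equivalent to the semantic definition of core-completeness.
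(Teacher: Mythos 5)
Your extensional reformulation and the counting argument are essentially the paper's own proof: it too defines a closed-world relation $\dlM \models \mathsf{B}_\dlS(a)$ by direct lookup (membership of the concept assertion for $\mathsf{A}_\dlS$, existence of an incident role assertion in the appropriate direction for $\exists\mathsf{r}_\dlS$ and $\exists\mathsf{r}^-_\dlS$, false for $\bot$), then searches for a violating pair of a positive inclusion axiom and a node, which is polynomial in $|\dlM|$; the full-satisfiability claim is likewise obtained by combining this with the preceding lemma and the algorithm \textsf{Consistent}.

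However, the one step you explicitly deferred---fixing the set of nodes over which the test quantifies---you resolve in the wrong direction, and this is where your algorithm stops deciding core-completeness. The correct quantification is over the model nodes $a_\dlM \in \modelNodes$ only, not over all constants occurring in \dlM. The constants of \dlM also include boundary nodes from \boundaryNodes (the generic $a$ in the assertion forms $\mathsf{R}_\dlS(a_\dlM,a)$ and $\mathsf{R}_\dlS(a,a_\dlM)$), and over those the specification predicates are \emph{not} closed: their extension is closed over \modelNodes and open elsewhere. Concretely, take $\dlS = \{\exists\mathsf{r}^-_\dlS \sqsubseteq \mathsf{A}_\dlS\}$ and $\dlM = \{\mathsf{r}_\dlS(a_\dlM,b)\}$ with $b$ a boundary node. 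Your test evaluates $b \in (\exists\mathsf{r}^-_\dlS)^\dlM$ as true and $b \in (\mathsf{A}_\dlS)^\dlM$ as false---indeed $\mathsf{A}_\dlS(b)$ is not even a legal \dlM-assertion---so it reports a violation. Yet this knowledge base is core-complete: at the boundary node $b$ the concept $\mathsf{A}_\dlS$ is open, so a model can simply make $\mathsf{A}_\dlS(b)$ true, which is exactly what the canonical interpretation of \cite{CauliOP21} does, since its rules \textbf{c5--c8} apply the axioms of $PI_\dlS$ to open (hence boundary) nodes and only fail to do so at model nodes. Your procedure therefore decides a strictly stronger condition and would wrongly reject fully satisfiable knowledge bases; restricting the outer loop to \modelNodes repairs it and recovers the paper's proof.
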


\begin{proof}
One can write an algorithm that checks \emph{core-completeness} by 
searching for the existence of a 
positive inclusion axiom $\mathsf{B}^1_\dlS 
\sqsubseteq \mathsf{B}^2_\dlS \in PI_\dlS$ such that $\dlM\models 
\mathsf{B}^1_\dlS(a_\dlM)$ and $\dlM \not\models 
\mathsf{B}^2_\dlS(a_\dlM)$, where the relation $\models$ is defined 
over \dlLiteF concept expressions as follows: 
\begin{align*}
\dlM \models& \bot(a_\dlM) \quad \leftrightarrow \quad \textit{false} 
\\ 
\dlM \models& \mathsf{A}_\dlS(a_\dlM) \quad \leftrightarrow \quad 
\mathsf{A}_\dlS(a_\dlM) 
\!\in\! 
\dlM \\ 
\dlM \models& \exists\mathsf{r}_\dlS(a_\dlM) \quad \leftrightarrow 
\quad \exists b.\ 
\mathsf{r}_\dlS(a_\dlM,b) 
\!\in\! 
\dlM \\ 
\dlM \models& \exists\mathsf{r}^-_\dlS(a_\dlM) \quad \leftrightarrow 
\quad \exists b.\ 
\mathsf{r}_\dlS(b,a_\dlM) 
\!\in\!\dlM .
\end{align*}
The knowledge base is \emph{core-complete} if such a node cannot be 
found.
\end{proof}

\section{Actions}\label{sec:act}

We now introduce a formal language to encode mutating actions. 
Let us remind ourselves that, in our application of interest, the 
execution of a mutating action modifies the configuration of a 
deployment by either adding new resource instances, deleting 
existing ones, or modifying their settings. Here, we introduce~a 
framework for \dlLiteF core-closed knowledge base updates, triggered 
by the execution of an action that enables all the above mentioned 
effects. The only component of the core-closed knowledge base that is 
modified by the action execution is \dlM; while \dlT, \dlS, and \dlA 
remain unchanged. As a consequence of updating \dlM, actions can 
introduce new individuals and delete old ones, thus updating the set 
\modelNodes as well. Note that this may force 
changes outside \modelNodes due to the axioms in $\dlT$ and $\dlS$. 
%
The effects of applying an action over \dlM depend on a set of input 
parameters that will be instantiated at execution time, resulting in 
different assertions being added or removed from \dlM. 
As a consequence of assertions being added, fresh individuals might 
be introduced in the active domain of \dlM, including both model 
nodes from \modelNodes and boundary nodes from \boundaryNodes. 
Differently, as a consequence of assertions being 
removed, individuals might be removed from the active domain of \dlM, 
including model nodes from \modelNodes but \emph{not} 
including boundary nodes from \boundaryNodes. In fact, 
boundary nodes are owned by the open portion of the 
knowledge base and are known to exist regardless of them being used 
in \dlM. We invite the reader to 
review the set definitions for \dlA- and \dlM-assertions 
(\cref{sec:background}) to note that it is indeed possible for a 
generic  boundary individual $a$ involved in an \dlM-assertion to 
also be involved in an \dlA-assertion.

\subsection{Syntax}

An action is defined by a signature and a body. The signature 
consists of an action name and a list of formal parameters, which 
will be replaced with actual parameters at execution time. The 
body, or action effect, can include conditional statements and 
concatenation of atomic operations over \dlM-assertions.  
For example, let $\alpha$ be the action $act (\vec{x}) = \gamma$; 
that is, the action denoted by signature $act(\vec{x})$ and body 
$\gamma$, with signature name $act$, signature parameters $\vec{x}$, 
and body effect $\gamma$. Since it contains unbound parameters, or 
free variables, action $\alpha$ is ungrounded and needs to be 
instantiated with actual values in order to be executed over a set 
\dlM. In the following, we assume the existence of a set 
$\mathsf{Var}$, of variable names, and consider a generic input 
parameters substitution $\vec\theta : \mathsf{Var} \rightarrow 
\individuals$, which replaces each variable name by an individual 
node. For simplicity, we will denote an ungrounded action 
by its effect $\gamma$, and a grounded action by the composition of 
its effect with an input parameter substitution $\gamma\vec{\theta}$.
Action effects can either be \emph{complex} or \emph{basic}. 
The syntax of complex action effects $\gamma$ and basic 
effects $\beta$ is constrained by the following grammar.
\begin{align*}
\gamma ::=\ & 
	\epsilon\ \mid\  
	\beta \cdot \gamma\ \mid\  
	[\ \!\varphi \rightsquigarrow \beta\ \!] \cdot \gamma\\
\beta ::=\ & 
	\oplus_x S \ \mid\   
	\ominus_x\ S \ \mid\  
	\odot_{x_{new}} S \ \mid\ 
	\ominus_x 
\end{align*}

The complex action effects $\gamma$ include: the empty effect 
($\ \!\epsilon\ \!$), the execution of a basic effect
followed by a complex one ($\ \!\beta\cdot\gamma\ \!$), and the 
conditional execution of a basic effect upon evaluation of a formula 
$\varphi$ over the set \dlM ($\ \![\ \!\varphi \rightsquigarrow 
\beta\ \!] \cdot \gamma\ \!$).
The basic action effects $\beta$ include: the addition of a set $S$ 
of \dlM-assertions to the subset $\dlM_x$ ($\ \!\oplus_x S\ \!$), the 
removal of a set $S$ of \dlM-assertions from the subset $\dlM_x$
($\ \!\ominus_x S\ \!$), the addition of a fresh subset 
$\dlM_{x_{new}}$ containing all the \dlM-assertions in the set $S$ 
($\ \!\odot_{x_{new}} S\ \!$), and the removal of an existing 
$\dlM_x$ subset in its entirety ($\ \!\ominus_x\ \!$).
The set $S$, the formula $\varphi$, and the operators 
$\oplus/\ominus$ might contain \emph{free variables}. 
These variables are of two types: \emph{(1)} variables that are 
replaced by the grounding of the action input parameters, and  
\emph{(2)} variables that are the answer variables of the formula 
$\varphi$ and appear in the nested effect $\beta$.

\begin{example}
The following is the definition of the action \textsf{createBucket} 
from the API reference of the AWS resource type 
$\mathsf{S3\!\!::\!\!Bucket}$. The input parameters 
are two: the new bucket name $``name"$ and the canned access 
control list $``acl"$ (one of \emph{Private}, \emph{PublicRead}, 
\emph{PublicReadWrite}, \emph{AuthenticatedRead}, etc.). The effect 
of the action is to add a fresh subset $\dlM_x$ for the newly 
introduced individual $x$ containing the two assertions 
$\mathsf{S3\!\!::\!\!Bucket}(x)$ and $\mathsf{accessControl}(x,y)$. 
\begin{align*}
\mathsf{createBucket}(x :name, y :acl) = \odot_{x} \{\  
\mathsf{S3\!\!::\!\!Bucket}(x), \mathsf{accessControl}(x,y)\ \} \cdot 
\epsilon
\end{align*}
The action needs to be instantiated by a specific parameter 
assignment, for example the substitution $\theta = 
[\ x \leftarrow DataBucket,\ y \leftarrow Private\ ]$, which binds 
the variable $x$ to the node $DataBucket$ and the variable $y$ to the 
node $Private$, both taken from a pool of inactive nodes in 
\individuals. 
\end{example}
\paragraph{Action Query $\varphi$}
The syntax introduced in the previous paragraph allows for complex 
actions that conditionally execute a basic effect $\beta$ depending 
on the evaluation of a formula $\varphi$ over \dlM. This is done via 
the construct $[\ 
\!\varphi \rightsquigarrow \beta\ \!]\cdot \gamma$. The formula 
$\varphi$ might have a set $\vec{y}$ of answer variables that appear 
free in its body and are then bound to concrete tuples 
of nodes during evaluation. The answer tuples are in turn used to 
instantiate the free variables in the nested effect $\beta$.
We call $\varphi$ the \emph{action query} since we use it to select 
all the nodes that will be involved in the action effect. According 
to the grammar below, $\varphi$ is a boolean combination of 
\dlM-assertions potentially containing free variables. 
\begin{align*}
\varphi ::=\ & 
	\mathsf{A}_\dlS(t) \ \mid\  
	\mathsf{R}_\dlS(t_1,t_2) \mid\  
	\varphi_1 \wedge \varphi_2 \ \mid\ 
	\varphi_2 \vee \varphi_2\ \mid\ 
	\neg \varphi
\end{align*}
In particular, $\mathsf{A}_\dlS$ is a symbol from the set 
\specConcepts of partially-closed concepts; 
$\mathsf{R}_\dlS$ is a symbol from the set \specRoles of 
partially-closed roles; and $t,t_1,t_2$ are either individual or 
variable names from the set $\individuals \uplus \mathsf{Var}$, chosen
in such a way that the resulting assertion is an \dlM-assertion. 
Since the formula $\varphi$ can only refer to \dlM-assertions, which 
are interpreted under a closed semantics, its evaluation requires 
looking at the content of the set \dlM. 
A formula $\varphi$ with no free variables is a boolean formula and 
evaluates to either true or false.
A formula $\varphi$ with answer variables $\vec{y}$ and arity 
$ar(\varphi)$ evaluates to all the tuples $\vec{t}$, of size 
equal the arity of $\varphi$, that make the formula true in \dlM. 
The free variables of $\varphi$ can only appear in the action 
$\beta$ such that $\varphi \rightsquigarrow \beta$.
We denote by $\mathsf{ANS}(\varphi,\dlM)$ the set of answers to the 
action query $\varphi$ over \dlM. It is easy to see that the maximum 
number of tuples that could be returned by the evaluation (that is, 
the size of the set $\mathsf{ANS}(\varphi,\dlM)$) is bounded by 
$|\modelNodes \uplus \boundaryNodes|^{ar(\varphi)}$, in turn bounded 
by $(\ \!2 |\dlM|\ \!)^{2|\varphi|}$.

\begin{example}
The following example shows the encoding of the S3 API operation 
called $\mathsf{deleteBucketEncryption}$, which requires as unique 
input parameter the name of the bucket whose encryption configuration 
is to be deleted. Since a bucket can have multiple encryption 
configuration rules (each prescribing different encryption keys and 
algorithms to be used) we use an action query $\varphi$ to select 
\emph{all} the nodes that match the assertions structure to be 
removed.  
\[
\varphi[y,k,z](x) = 
\mathsf{S3\!\!::\!\!Bucket}(x) \wedge \mathsf{encrRule}(x,y) 
\wedge \mathsf{SSEKey}(y,k) \wedge \mathsf{SSEAlgo}(y,z)
\]
The query $\varphi$ is instantiated by the specific bucket instance 
(which will replace the variable $x$) and returns all the triples 
$(y,k,z)$ of encryption rule, key, and algorithm, respectively, which 
identify the assertions corresponding to the different encryption 
configurations that the bucket has. The answer variables are 
then used in the action effect to instantiate the assertions to 
remove from $\dlM_{x}$: 
\begin{align*}
&\mathsf{deleteBucketEncryption}(x : name) = \\
&[\ \ 
\varphi[y,k,z](x)\ \rightsquigarrow\ \ominus_{x} \{\ 
\mathsf{encrRule}(x,y),
\mathsf{SSEKey}(y,k),
\mathsf{SSEAlgo}(y,z)
\ \}
\ \ ] \cdot \epsilon
\end{align*}
\end{example}

\subsection{Semantics}
So far, we have described the syntax of our action language and 
provided two examples that showcase the encoding of real-world API 
calls. Now, we define the semantics of action effects with respect to 
the changes that they induce over a knowledge base.
Let us recall that given a substitution $\vec\theta$ for the input 
parameters of an action $\gamma$, we denote by 
$\gamma\vec\theta$ the grounded action where all the input variables 
are replaced according to what prescribed by $\vec\theta$. Let us 
also recall that the effects of an action apply only to assertions in 
\dlM and individuals from \modelNodes, and cannot 
affect nodes and assertions from the open portion of the knowledge 
base. 

The execution of a grounded action $\gamma\vec\theta$ over a 
\dlLiteF core-closed knowledge base $\dlK = (\dlT, \dlA, \dlS, 
\dlM)$, defined over the set \modelNodes of partially-closed 
individuals, generates a 
new knowledge base $\dlK^{\gamma\vec\theta} = (\dlT, \dlA,\dlS, 
\dlM^{\gamma\vec\theta})$, defined over 
an updated set of partially-closed individuals 
$\mathbf{I}^{\dlM^{\gamma\vec\theta}}$.
Let $S$ be a set of \dlM-assertions, $\gamma$ a complex action, 
$\vec\theta$ an input parameter substitution, and $\vec\rho$ 
a generic substitution that potentially replaces all 
free variables in the action $\gamma$. 
Let $\vec\rho_1$ and $\vec\rho_2$ be two substitutions with signature 
$\mathsf{Var} \rightarrow \individuals$ such that $dom(\vec\rho_1) 
\cap dom(\vec\rho_2) = \emptyset$; we denote their composition by 
$\vec\rho_1 \vec\rho_2$ and define it as the new substitution such 
that $\vec\rho_1\vec\rho_2(x) = a$ if $\vec\rho_1(x)\!=\!a\ \vee\ 
\vec\rho_2(x)\!=\!a$, and $\vec\rho_1\vec\rho_2(x) = \bot$ if 
$\vec\rho_1(x)\!=\!\bot\ \wedge\ 
\vec\rho_2(x)\!=\!\bot$.
We formalize the application of the grounded action 
$\gamma\vec\theta$ as the transformation $T_{\gamma\vec\theta}$ that 
maps the pair $\left< \dlM,\modelNodes \right>$ into the new pair 
$\left< \dlM',{\mathbf{I}^{\dlM}}' \right>$. We sometimes use the 
notation $T_{\gamma\vec\theta}(\dlM)$ or 
$T_{\gamma\vec\theta}(\modelNodes)$ to refer to the updated 
MBox or to the updated set of model nodes, respectively.
The rules for applying the transformation depend on the structure of 
the action $\gamma$ and are reported below. 
The transformation starts with an initial generic substitution 
$\vec\rho = \vec\theta$. As the transformation progresses, the 
generic substitution $\vec\rho$ can be updated only as a result of 
the evaluation of an action query $\varphi$ over \dlM. Precisely, all 
the tuples $\vec{t_1},...,\vec{t_n}$ making $\varphi$ true in \dlM 
will be considered and composed with the current substitution 
$\vec\rho$ generating $n$ fresh substitutions $\vec{\rho 
t_1},...,\vec{\rho t_n}$ which are used in the subsequent 
application of the nested effect $\beta$.
\begin{align*}
T_{\epsilon\vec\rho}
(\dlM,\modelNodes)
=& (\dlM,\modelNodes)\\
T_{\beta\cdot\gamma\vec\rho}
(\dlM,\modelNodes)
=& 
T_{\gamma\vec\rho}\big(\ 
T_{\beta\vec\rho}(\dlM,\modelNodes)\ \big)\\
T_{[ \varphi \rightsquigarrow \beta] \cdot 
\gamma\vec\rho}
(\dlM,\modelNodes)
=& 
\begin{cases}
T_{\gamma\vec\rho} \big(
T_{\beta\vec\rho}(\dlM,\modelNodes)\ \big) 
\qquad\quad\  
\text{if 
}\mathsf{ANS}(\varphi,\dlM)= tt
\\
T_{\gamma\vec\rho}(\dlM,\modelNodes)
\qquad\qquad\qquad\ 
\text{if 
}\mathsf{ANS}(\varphi,\dlM)=\emptyset \textit{ or } {f\!\!f}
\\ 
T_{\gamma\vec\rho} \big(T_{\beta{\vec\rho\vec{t}_1} \cdot .. \cdot 
\beta{\vec\rho\vec{t}_n}}(\dlM,\modelNodes)\big) 
\ 
\text{if 
}\mathsf{ANS}(\varphi,\dlM)=\{ \vec{t}_1,..,\vec{t}_n \}\\ 
\end{cases} 
\\ 
T_{\oplus_x S\vec\rho}
(\dlM,\modelNodes)
=& 
\big(\ \{\dlM_i\}_{i\not= 
\vec\rho(x)} 
\cup 
\{\dlM_{\vec\rho(x)}\cup 
S_{\vec\rho}\}\ ,\ \individuals^\dlM\cup ind(S_{\vec\rho})\ \big) \\
T_{\ominus_x S\vec\rho }
(\dlM,\modelNodes)
=& 
\big(\ \{\dlM_i\}_{i\not= 
\vec\rho(x)} 
\cup 
\{\dlM_{\vec\rho(x)}\smallsetminus 
S_{\vec\rho}\}\ ,\ \individuals^\dlM\smallsetminus 
ind(S_{\vec\rho})\ \big) \\
T_{\odot_{x} S\vec\rho}
(\dlM,\modelNodes)
=& 
\big(\  
\dlM 
\cup \{
\dlM_{\vec\rho(x)}= S_{\vec\rho}\}\ ,\ \individuals^\dlM\cup 
ind(S_{\vec\rho})\ 
\big)\\
T_{\ominus_x\vec\rho} 
(\dlM,\modelNodes)
=& \big(\ \dlM 
\smallsetminus 
\dlM_{\vec\rho(x)}\ ,\ \modelNodes\smallsetminus 
ind(\dlM_{\vec\rho(x)})\ 
\big)\\
\end{align*}
Since the core \dlM of the knowledge base \dlK changes at every 
action execution, its domain of model nodes \modelNodes changes as 
well. The execution of an action $\gamma\vec\theta$ over the 
knowledge base $\dlK = (\dlT, \dlA,\dlS, \dlM)$ with set of model 
nodes \modelNodes could generate a new $\dlK^{\gamma\vec\theta} = 
(\dlT, \dlA,\dlS, \dlM^{\gamma\vec\theta})$ with a new set of model 
nodes ${\modelNodes}'$ that is not \emph{core-complete} or 
not \emph{open-consistent} (see~\cref{sec:coreComp} for the 
corresponding 
definitions).
We illustrate two examples next. 

\begin{example}[Violation of core-completeness]
Consider the case where the general specifications of the system 
require all objects of type bucket to have a logging configuration, 
and an action that removes the logging configuration from a bucket. 
Consider the core-closed knowledge base \dlK where 
$\dlS = \{ \mathsf{S3\!\!::\!\!Bucket} \sqsubseteq \exists 
\mathsf{loggingConfiguration}\}$ and 
$\dlM = \{\mathsf{S3\!\!::\!\!Bucket}(b), 
\mathsf{loggingConfiguration}(b,c)\}$ (consistent wrt \dlS) and the 
action $\gamma$ defined as 
\begin{align*}
&\mathsf{deleteLoggingConfiguration}(x: name) =\\ 
&\qquad[\ (\varphi[y](x) = \mathsf{S3\!\!::\!\!Bucket}(x) \wedge 
\mathsf{loggingConfiguration}(x,y)) \\
&\qquad\qquad\rightsquigarrow 
\ominus_{x}\{\mathsf{loggingConfiguration}(x,y)\}\ ] \cdot \epsilon
\end{align*}
For the input parameter substitution $\vec\theta = [\ x\leftarrow b 
\ ]$, it is easy to see that the transformation 
$T_{\gamma\vec\theta}$ 
applied to \dlM results in the update $\dlM^{\gamma\vec\theta}= \{
\mathsf{S3\!\!::\!\!Bucket}(b)
\}$, which is \emph{not} core-complete. 
\end{example}

\begin{example}[Violation of open-consistency]
Consider the case where an action application indirectly affects 
boundary nodes and their properties, leading to inconsistencies in 
the open portion of the knowledge base. For example, when the 
knowledge base prescribes that buckets used to store logs cannot be 
public; however, a change in the configuration of a bucket instance 
causes a second bucket (initially known to be public) to also 
become a log store. In particular, this happens when the knowledge 
base \dlK contains the \dlT-axiom $\exists 
\textsf{loggingDestination}^- \! \sqsubseteq \! 
\neg\mathsf{PublicBucket}$ and the \dlA-assertion 
$\mathsf{PublicBucket}(b)$, and we apply an action that 
introduces a new bucket storing its logs to $b$, defined as follows: 
\begin{align*}
&\mathsf{createBucketWithLogging}(x: name, y: log) = \\
&\qquad \odot_{x} \{ 
\mathsf{S3\!\!::\!\!Bucket}(x), \mathsf{loggingDestination}(x,y) \} 
\end{align*}
For the input parameter substitution $\vec\theta = [\ x \leftarrow 
newBucket, y \leftarrow b\ ]$, the result of applying the 
transformation $T_{\gamma\vec\theta}$ is the set $\dlM =\{\  
\mathsf{S3\!\!::\!\!Bucket}(newBucket),$ $ 
\mathsf{loggingDestination}(newBucket,b)\ \}$ which, combined with 
the pre-existing and unchanged sets \dlT and \dlA, causes the updated 
$\dlK^{\gamma\vec\theta}$ to be \emph{not} open-consistent. 
\end{example}

From a practical point of view, the examples highlight the need to 
re-evaluate core-completeness and open-consistency of a core-closed 
knowledge base after each action execution. Detecting a violation to 
core-completeness signals that we have modeled an action that is 
inconsistent with respect to the systems specifications, which most 
likely means that the action is missing something and needs to be 
revised. Detecting a violation to open-consistency signals that our 
action, even when consistent with respect to the specifications, 
introduces a change that conflicts with other assumptions that we 
made about the system, and generally indicates that we should either 
revise the assumptions or forbid the application of the action. Both 
cases are important to consider in the development life cycle of the  
core-closed KB and the action definitions.

\section{Static Verification}\label{sec:sv}

In this section, we investigate the problem of computing whether 
the execution of an action, no matter the specific instantiation, 
always preserves given properties of core-closed knowledge bases. We 
focus on properties expressed as \textsc{Must}/\textsc{May} queries 
and define the static verification problem as follows.

\begin{definition}[Static Verification]
Let \dlK be a \dlLiteF core-closed knowledge base, $q$ be a 
\textsc{Must}/\textsc{May} query, and $\gamma$ be an action with free 
variables from the language presented above. Let $\vec\theta$ be an 
assignment for the input variables of $\gamma$ that transforms $\gamma$ into the \emph{grounded} 
action 
$\gamma\vec\theta$. Let $\dlK^{\gamma\vec\theta}$ be the \dlLiteF 
core-closed 
knowledge base resulting from the application of the grounded action 
$\gamma\vec\theta$ onto $\dlK$. We say that the action $\gamma$ 
\emph{``preserves $q$ over $\dlK$"} iff for every grounded instance 
$\gamma\vec\theta$ we have that $\mathsf{ANS}(q,\dlK)=\mathsf{ANS} 
(q,\dlK^{\gamma\vec\theta})$. The static verification problem is that 
of 
determining whether an action $\gamma$ is $q$-preserving over $\dlK$.
\end{definition}

An action $\gamma$ is \emph{not} $q$-preserving over $\dlK$ iff there 
exists a grounding $\vec\theta$ for the input variables of $\gamma$ 
such 
that $\mathsf{ANS}(q,\dlK) \not= \mathsf{ANS} 
(q,\dlK^{\gamma\vec\theta})$; that is, fixed the grounding 
$\vec\theta$ there 
exists a tuple $\vec{t}$ for $q$'s answer variables such that 
$\vec{t}\in\mathsf{ANS}(q,\dlK) \smallsetminus 
\mathsf{ANS}(q,\dlK^{\gamma\vec\theta})$ or 
$\vec{t}\in\mathsf{ANS}(q,\dlK^{\gamma\vec\theta}) \smallsetminus 
\mathsf{ANS}(q,\dlK)$.

\begin{theorem}[Complexity of the Static Verification Problem]
The static verification problem, i.e.deciding whether an action 
$\gamma$ is $q$-preserving over \dlK, can be decided in 
polynomial time in data complexity.
\end{theorem}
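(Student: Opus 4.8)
The plan is to reduce the universally-quantified preservation condition to a check over only polynomially many \emph{representative} groundings, and then to observe that each such check is feasible in polynomial time by appealing to the query-answering bound recalled from \cite{CauliOP21} together with the polynomial cost of applying a grounded action to \dlM. Throughout, the query $q$, the action $\gamma$, the TBox \dlT, and the specifications \dlS are fixed, and complexity is measured in the size of $\dlA\cup\dlM$.

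First I would settle the genericity issue, which is the crux of the argument. A priori there are infinitely many groundings $\vec\theta$, since an input parameter may be instantiated by any individual from the unbounded pool \individuals, including arbitrary \emph{fresh} nodes. The key observation is that preservation is invariant under renaming of fresh individuals: if $\pi$ is any permutation of \individuals fixing every individual occurring in \dlK, $q$, and $\gamma$, then an induction on the structure of $\gamma$ shows that action application commutes with $\pi$, i.e.\ $\dlK^{\gamma(\pi\vec\theta)}$ is the $\pi$-image of $\dlK^{\gamma\vec\theta}$ (the intermediate \dlM's and the answers of each action query $\varphi$ transform along $\pi$, since $\pi$ fixes the originally-present nodes). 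As both $q$ and \dlK are $\pi$-invariant, we get $\mathsf{ANS}(q,\dlK)=\mathsf{ANS}(q,\dlK^{\gamma\vec\theta})$ iff $\mathsf{ANS}(q,\dlK)=\mathsf{ANS}(q,\dlK^{\gamma(\pi\vec\theta)})$. Hence it suffices to range over groundings that send each input parameter either to one of the individuals occurring in $\dlA\cup\dlM$ or to one of a fixed set of canonical fresh individuals (one per input parameter). Since the number of input parameters of $\gamma$ is a constant in data complexity, the number of such representatives is bounded by $(N+k)^{k}$ with $N$ the number of individuals in $\dlA\cup\dlM$ and $k$ constant, hence polynomial.

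Second, for each representative $\vec\theta$ I would compute $\dlK^{\gamma\vec\theta}$ and argue this is polynomial. Because $\gamma$ has fixed structure, the transformation $T_{\gamma\vec\theta}$ performs a constant number of basic effects interleaved with a constant number of action-query evaluations. Each query $\varphi$ is a Boolean combination of \dlM-assertions read under the closed-world semantics of \dlM, and by the bound already noted $|\mathsf{ANS}(\varphi,\dlM)|\le|\modelNodes\uplus\boundaryNodes|^{ar(\varphi)}$, which is polynomial as $ar(\varphi)$ is constant; computing it and applying each basic effect enlarges \dlM by a polynomially-bounded amount in polynomial time. Thus $\dlM^{\gamma\vec\theta}$ has polynomial size and is produced in polynomial time (and, if desired, its full satisfiability is tested in polynomial time via \cref{lemma:fullSatComplexity} and the algorithm \textsf{Consistent}).

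Finally, I would compute $\mathsf{ANS}(q,\dlK)$ once and $\mathsf{ANS}(q,\dlK^{\gamma\vec\theta})$ for each representative $\vec\theta$; by the result recalled from \cite{CauliOP21}, answering a \textsc{Must}/\textsc{May} query over a core-closed knowledge base is in \textsc{LogSpace}, hence in polynomial time, in data complexity, and each answer set has polynomial size, so the comparison $\mathsf{ANS}(q,\dlK)=\mathsf{ANS}(q,\dlK^{\gamma\vec\theta})$ is polynomial. The action is declared $q$-preserving iff this equality holds for every representative grounding; as there are polynomially many and each is handled in polynomial time, the whole procedure is polynomial. The step I expect to be the main obstacle is the genericity reduction of paragraph two: without it the universal quantifier ranges over an infinite set of groundings, and one must verify carefully that $T_{\gamma\vec\theta}$ genuinely commutes with renamings of the fresh individuals it introduces, so that a bounded family of representatives is complete for deciding preservation.
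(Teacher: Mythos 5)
Your proposal is correct and takes essentially the same route as the paper's proof: enumerate polynomially many representative groundings (known individuals from $\modelNodes \uplus \openIndividuals$ plus a bounded set of fresh ones), compute each updated $\dlK^{\gamma\vec\theta}$ and check its full satisfiability in polynomial time, and compare answers via the \textsc{LogSpace} query-answering bound of \cite{CauliOP21}. The only real difference is that you explicitly justify the restriction to representative groundings with a renaming-invariance argument, a step the paper simply asserts ("it is sufficient to replace each variable either by a known object or by a fresh one") without proof.
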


\begin{proof}
The proof relies on the fact that one could: enumerate all possible 
assignments $\vec\theta$; compute the updated knowledge bases 
$\dlK^{\gamma\vec\theta}$; check whether these are fully satisfiable; 
enumerate all tuples $\vec t$ for the query $q$; and, finally, check 
whether there exists at least one such tuple that satisfies $q$ over 
$\dlK$ but not $\dlK^{\gamma\vec\theta}$ or vice versa.
The number of assignments $\vec\theta$ is bounded by 
$\big(|\modelNodes \uplus \openIndividuals| + 
ar(\gamma)\big)^{ar(\gamma)}$ as it is sufficient to replace each 
variable appearing in the action $\gamma$ either by a known object 
from $\modelNodes \uplus \openIndividuals$ or by a fresh one.
The computation of the updated 
$\dlK^{\gamma\vec\theta}$ is done in polynomial time in \dlM (and 
is exponential in the size of the action $\gamma$) as it may require 
the evaluation of an internal action query $\varphi$ and the 
consecutive re-application of the transformation for a number of 
tuples that is bounded by a polynomial over the size of \dlM.
As explained in~\cref{sec:coreComp}, checking full satisfiability of the 
resulting core-closed knowledge base is also polynomial in \dlM.
The number of tuples $\vec t$ is bounded by $\big(|\modelNodes \uplus 
\openIndividuals| + ar(\gamma)\big)^{ar(q)}$ as it is enough to 
consider all those tuples involving known objects plus the fresh 
individuals introduced by the assignment $\vec\theta$.
Checking whether a tuple $\vec t$ satisfies the query $q$ over a 
core-closed knowledge base is decided in $\textsc{LogSpace}$ in the 
size of \dlM~\cite{CauliOP21} which is, thus, also 
polynomial in \dlM.
\end{proof}

\section{Planning}\label{sec:ts}

As discussed throughout the paper, the execution of a mutating action 
modifies the configuration of a deployment and potentially changes 
its posture with respect to a given set of requirements. In the 
previous two sections, we introduced a language to encode mutating 
actions and we investigated the problem of checking whether the 
application of an action preserves the properties of a core-closed 
knowledge base. In this section, we investigate the plan existence 
and synthesis problems; that is, the problem of deciding whether 
there exists a sequence of grounded actions that leads the knowledge 
base to a state where a certain requirement is met, and the problem 
of finding a set of such plans, respectively. We start by defining a 
notion of transition system that is generated by applying actions to 
a core-closed knowledge base and then use this notion to focus on the 
mentioned planning problems.
As in classical planning, the plan existence problem for plans 
computed over unbounded domains is 
undecidable~\cite{ErolNS95,Chapman87}. The undecidability proof is 
done via reduction from the Word problem.
The problem of deciding whether a deterministic Turing machine $M$ 
accepts a word $w\in\{0,1\}^*$ is reduced to the plan existence 
problem. Since undecidability holds even for basic action effects, we 
can show undecidability over an unbounded domain by using the same 
encoding of~\cite{AhmetajCOS17}.

\paragraph{Transition Systems}
In the style of the work done in~\cite{CalvaneseMPG15,HaririCMGMF13}, 
the combination of a \dlLiteF core-closed knowledge base and a set of 
actions can be viewed as the transition system it generates.
Intuitively, the states of the transition system correspond to MBoxes 
and the transitions between states are labeled by grounded actions. 
%
%
A \dlLiteF core-closed knowledge base $\dlK = (\dlT, \dlA, 
\dlS, \dlM_0)$, defined over the possibly infinite set of individuals 
$\individuals$ (and model nodes $\modelNodes_0 \subseteq \individuals 
$) and the set \textsf{Act} of ungrounded actions, generates the 
transition system (TS) $\Upsilon_\dlK = (\individuals, \dlT, \dlA, 
\dlS, \Sigma, \dlM_0, \rightarrow)$ where $\Sigma$ is a set of 
\emph{fully satisfiable} (i.e., \emph{core-complete} and 
\emph{open-consistent}) MBoxes; $\dlM_0$ is the initial MBox; and 
$\rightarrow \subseteq 
\Sigma \times L_{\mathsf{Act}} \times \Sigma$ is a labeled transition 
relation with $L_\mathsf{Act}$ the set of all possible \emph{grounded 
actions}. The sets $\Sigma$ and $\rightarrow$ are defined by mutual 
induction as the smallest sets such that: if $\dlM_i \in \Sigma$ 
then for every grounded action $\gamma\vec\theta \in L_\mathsf{Act}$ 
such that the fresh MBox $\dlM_{i+1}$ resulting from the 
transformation  $T_{\gamma\vec\theta}$ is core-complete and 
open-consistent, we have that $\dlM_{i+1} \in \Sigma$ and $(\dlM_i, 
\gamma\vec\theta, \dlM_{i+1}) \in \rightarrow$.

Since we assume that actions have input parameters that are replaced 
during execution by values from \individuals, which contains both 
known objects from $\modelNodes \uplus \openIndividuals$ and possibly 
infinitely many fresh objects, the generated transition system 
$\Upsilon_\dlK$ is generally infinite. 
To keep the planning problem decidable, we concentrate on a known 
finite subset $\dl{D}\subset\individuals$ containing all the fresh 
nodes and value assignments to action variables that are of interest 
for our application. In the remainder of this paper, we discuss the 
plan existence and synthesis problem for finite transition systems 
$\Upsilon_\dlK = (\dl{D}, \dlT, \dlA, \dlS, \Sigma, \dlM_0, 
\rightarrow)$, whose states in $\Sigma$ have a domain that is also 
bounded by $\dl D$.

\paragraph{The Plan Existence Problem}
A {plan} is a sequence of grounded actions whose execution leads to a 
state satisfying a given property. Let $\dlK = (\dlT, \dlA, \dlS, 
\dlM_0)$ be a \dlLiteF core-closed knowledge base; \textsf{Act} be a 
set of ungrounded actions; and let $\Upsilon_\dlK = (\dl D, 
\dlT, \dlA,\dlS,\Sigma,\dlM_0,\rightarrow)$ be its generated finite 
TS. Let 
$\pi$ be a finite 
sequence $\gamma_1\vec\theta_1\cdots\gamma_n\vec\theta_n$ of grounded 
actions taken from the set $L_\mathsf{Act}$. We call the sequence 
$\pi$ \emph{consistent} iff there exists a run $\rho = \dlM_0 
\xrightarrow{\gamma_1\vec\theta_1} \dlM_1 
\xrightarrow{\gamma_2\vec\theta_2} \cdots 
\xrightarrow{\gamma_n\vec\theta_n} \dlM_n$ in $\Upsilon_\dlK$. 
Let $q$ be a \textsc{Must}/\textsc{May} query mentioning objects from 
$adom(\dlK)$ and $\vec t$ a tuple from the set 
$adom(\dlK)^{ar(q)}$. A consistent sequence $\pi$ of grounded 
actions is a \emph{plan} from $\dlK$ to $(\vec t, q)$ iff $\vec t \in 
\mathsf{ANS}(q, \dlK_n = (\dlT,\dlA,\dlS,\dlM_n))$ with $\dlM_n$ the 
final state of the run induced by $\pi$.
\begin{definition}[Plan Existence]
Given a \dlLiteF core-closed knowledge base $\dlK$, a tuple $\vec t$, 
and a \textsc{Must}/\textsc{May} query $q$, the \emph{plan existence} 
problem is that of deciding whether there exists a plan from \dlK to 
$(\vec t, q)$.
\end{definition}

\begin{example}
Let us consider the transition system $\Upsilon_\dlK$ generated by 
the core-closed knowledge base $\dlK = (\dlT,\dlA,\dlS,\dlM_0)$ 
having 
the set of partially-closed assertions $\dlM_0$ defined as 
\begin{align*}
\{\  
&\mathsf{S3\!\!::\!\!Bucket}(b),\  
\mathsf{KMS\!\!::\!\!Key}(k),\
\mathsf{bucketEncryptionRule}(b,r),\  
\mathsf{bucketKey}(r,k), \\
&\mathsf{bucketKeyEnabled}(r,true),\  
\mathsf{enableKeyRotation}(k,false)
\ \}
\end{align*}
and the set of action labels \textsf{Act} containing the actions 
\textsf{deleteBucket}, 
\textsf{createBucket}, 
\textsf{deleteKey}, 
\textsf{createKey}, 
\textsf{enableKeyRotation}, 
\textsf{putBucketEncryption}, and 
\textsf{deleteBucketEncryption}.
Let us assume that we are interested in verifying the existence of a 
sequence of grounded actions that when applied onto the knowledge 
base would configure the bucket node $b$ to be encrypted with a 
rotating key. 
Formally, this is equivalent to checking the existence of a 
consistent plan $\pi$ that when executed on the transition system 
$\Upsilon_\dlK$ leads to a state $\dlM_n$ such that the tuple $\vec t 
= b$ is in the set $\mathsf{ANS}(q, \dlK_n = 
(\dlT,\dlA,\dlS,\dlM_n))$ for $q$ the query
\begin{align*}
q[x] =\ 
&\mathsf{S3\!::\!Bucket}(x)\ \wedge\ \textsc{Must}\ \big( \ 
\exists  y,z .\ \mathsf{bucketSSEncryption}(x,y)\ \wedge\ \\
&\mathsf{bucketKey}(y,z) 
\ \wedge\ \mathsf{enableKeyRotation}(z, true)\ \big)
\end{align*}
It is easy to see that the following three sequences of grounded 
actions are valid plans from \dlK to $(b,q)$:
\begin{align*}
\pi_1 = \ &\mathsf{enableKeyRotation}(k)\ \\
\pi_2 = \ &\mathsf{createKey}(k_1) \cdot 
\mathsf{enableKeyRotation}(k_1) \cdot 
\mathsf{putBucketEncryption}(b,k_1)\ \\
\pi_3 = \ 
&\mathsf{deleteBucketEncryption}(b,k)\cdot
\mathsf{createKey}(k_1) \cdot 
\mathsf{enableKeyRotation}(k_1) \cdot \\
&\ \mathsf{putBucketEncryption}(b,k_1)\ 
\end{align*}
If, for example, a bucket was only allowed to have one encryption (by 
means of a functional axiom in \dlS), then $\pi_2$ would not be a 
valid plan, as it would generate an inconsistent run leading to a 
state $\dlM_i$ that is not open-consistent w.r.t. \dlS.
\end{example}

\begin{lemma}
The plan existence problem for a finite transition system 
$\Upsilon_\dlK$ generated by a \dlLiteF core-closed knowledge base 
$\dlK$ and a set of actions $\mathsf{Act}$, over a finite domain of 
objects \dl D, reduces to graph reachability over a graph whose 
number of 
states is at most exponential in the size of \dl D.
\end{lemma}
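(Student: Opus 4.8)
The plan is to make the reduction completely explicit: I would build a finite directed graph $G$ whose vertices are the fully satisfiable MBoxes over the bounded domain $\dl D$, and then observe that the existence of a plan from $\dlK$ to $(\vec t, q)$ is exactly the reachability, from the start vertex $\dlM_0$, of a distinguished set of ``target'' vertices. The whole argument is therefore a matter of (i) bounding the vertex set, (ii) showing the edge relation and the target set are effectively computable, and (iii) checking that the finite graph faithfully captures $\Upsilon_\dlK$.

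First I would bound the state space. Every state in $\Sigma$ is an MBox whose active domain is contained in the finite set $\dl D$, and every $\dlM$-assertion has one of the three shapes $\mathsf A_\dlS(a)$, $\mathsf R_\dlS(a,b)$, $\mathsf R_\dlS(b,a)$ over the fixed signature $\specConcepts \uplus \specRoles$. Hence the number of distinct $\dlM$-assertions over $\dl D$ is at most $|\specConcepts|\cdot|\dl D| + |\specRoles|\cdot|\dl D|^2$, which is polynomial in $|\dl D|$. An MBox is a subset of these assertions, so there are at most $2^{O(|\dl D|^2)}$ MBoxes over $\dl D$, i.e. exponentially many in the size of $\dl D$; the fully satisfiable ones, and in particular $\Sigma$, form a subset and are thus also bounded exponentially.

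Next I would describe the graph and argue it is computable. The vertices of $G$ are the fully satisfiable MBoxes over $\dl D$, and I place a directed edge $\dlM \to \dlM'$ whenever there is a grounded action $\gamma\vec\theta \in L_\mathsf{Act}$ with $T_{\gamma\vec\theta}(\dlM) = \dlM'$ such that $\dlM'$ is fully satisfiable. This is precisely the label-forgetting projection of the relation $\rightarrow$ onto pairs of states, so a run $\dlM_0 \xrightarrow{\gamma_1\vec\theta_1}\cdots\xrightarrow{\gamma_n\vec\theta_n}\dlM_n$ exists in $\Upsilon_\dlK$ iff there is a directed path from $\dlM_0$ to $\dlM_n$ in $G$. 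The edge relation is effective: each action in $\mathsf{Act}$ has fixed arity and its variables range over $\dl D$, giving at most $|\mathsf{Act}|\cdot|\dl D|^{O(1)}$ groundings; each transformation $T_{\gamma\vec\theta}(\dlM)$ is computed in time polynomial in $|\dlM|$; and full satisfiability of the result is checkable in polynomial time by \cref{lemma:fullSatComplexity}. I would then fix the target set $F$ to consist of those MBoxes $\dlM$ with $\vec t \in \mathsf{ANS}(q, (\dlT,\dlA,\dlS,\dlM))$, where membership is decidable in $\textsc{LogSpace}$ in $|\dlM|$ by the \textsc{Must}/\textsc{May} query-answering procedure of \cite{CauliOP21}. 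By the definition of a plan, a plan from $\dlK$ to $(\vec t, q)$ exists iff some vertex of $F$ is reachable from $\dlM_0$ in $G$.

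The main obstacle, and the point I would argue most carefully, is ensuring that $G$ captures $\Upsilon_\dlK$ without loss, since $\Sigma$ and $\rightarrow$ are defined as the \emph{smallest} sets closed under the transition rule rather than as an a priori given finite object. The key observation is that on the bounded transition system the fresh nodes introduced by $\odot_{x_{new}}$ are drawn from $\dl D$, so $T_{\gamma\vec\theta}$ never produces a node outside $\dl D$; consequently the inductive closure defining $\Sigma$ stays inside the finite universe of fully satisfiable MBoxes over $\dl D$. Restricting $G$ to this universe and computing reachability from $\dlM_0$ therefore recovers exactly the reachable fragment of $\Upsilon_\dlK$, and intersecting it with $F$ decides plan existence by standard graph reachability over a graph of at most exponentially many states in $|\dl D|$.
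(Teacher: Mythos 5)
Your construction is correct, and it is essentially the argument the paper intends: the paper in fact states this lemma \emph{without} any proof, and the closest it comes is the counting in the appendix proof of the \textsf{FindPlans} complexity (states are fully satisfiable MBoxes over \dl D, each with at most $|\mathsf{Act}|\cdot|\dl D|^{n}$ successors, exponentially many states overall), which your write-up makes explicit and completes. Your three steps---bounding the number of MBoxes over \dl D by counting the possible assertion shapes $\mathsf A_\dlS(a)$, $\mathsf R_\dlS(a,b)$, $\mathsf R_\dlS(b,a)$, taking the label-forgetting projection of $\rightarrow$ as the edge relation, and defining the target set via $\vec t \in \mathsf{ANS}(q,(\dlT,\dlA,\dlS,\dlM))$---match the paper's setup, and your closing observation that the inductive definition of $\Sigma$ cannot escape the finite universe of MBoxes over \dl D is exactly the point that makes the reduction sound; it is good that you flagged it rather than taking it for granted. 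One minor correction: you claim $T_{\gamma\vec\theta}(\dlM)$ is computable in time polynomial in $|\dlM|$, but the paper's static verification proof notes this computation is polynomial in \dlM yet \emph{exponential in the size of the action} $\gamma$ (because conditional effects re-apply $\beta$ for each answer tuple of $\varphi$); since the lemma only asserts a bound on the number of states, and action sizes are fixed parameters, this does not affect your conclusion, but the claim as stated is slightly too strong.
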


\paragraph{The Plan Synthesis Problem}
We now focus on the problem of finding plans that satisfy a given 
condition. As discussed in the previous paragraph, we are mostly 
driven by query answering; in particular, by conditions corresponding 
to a tuple (of objects from our starting deployment configuration) 
satisfying a given requirement expressed as a 
$\textsc{Must}/\textsc{May}$ query. 
Clearly, this problem is meaningful in our application of interest 
because it corresponds to finding a set of potential sequences of 
changes that would allow one to reach a configuration satisfying 
(resp., not satisfying) one, 
or more, security mitigations (resp., vulnerabilities).
We concentrate on \dlLiteF core-closed knowledge bases and their 
generated finite transition systems, where potential fresh objects 
are drawn from a fixed set \dl D.
%
We are interested in sequences of grounded actions that are minimal 
and ignore sequences that extend these. We sometimes call such 
mimimal sequences \emph{simple plans}. A plan $\pi$ from an initial 
core-closed knowledge base $\dlK$ to a goal condition $b$ is minimal 
(or simple) \emph{iff} 
there does not exist a plan $\pi'$ (from the same initial \dlK 
to the same goal condition $b$) s.t. $\pi = \pi' \cdot 
\sigma$, for $\sigma$ a non-empty suffix of grounded actions. 

In~\cref{algo:findPlans}, we present a depth-first search 
algorithm that, starting from \dlK, searches for all simple plans 
that achieve a given target query membership condition. The 
transition system 
$\Upsilon_\dlK$ is computed, and stored, on the fly in the 
\textsf{Successors} sub-procedure and the graph is explored in a 
depth-first search traversal fashion.
%
We note that the condition $\vec t \in \mathsf{ANS}(q, \left< \dlT, 
\dlA, \dlS, \dlM \right>)$ (line 9) could be replaced by any other 
query satisfiability condition and that one could easily rewrite the 
algorithm to be parameterized by a more general boolean goal. 
For example, the condition that a given tuple $\vec t$ is \emph{not} 
an answer to a query $q$ over the analyzed state, with the query $q$ 
representing an undesired configuration, or a boolean formula over 
multiple query membership assertions.
We also note that~\cref{algo:findPlans} could be simplified to return 
only one simple plan, if a plan exists, or \textsf{NULL}, if a plan 
does not exist, thus solving the so-called \emph{plan generation 
problem}. We include the modified algorithm 
in~\cref{appendix:findPlanAlgo} and the proofs of the following 
Theorems 
in~\Cref{proof:findPlansCorrectness,proof:findPlansComplexity}.

\setcounter{AlgoLine}{0}
\begin{algorithm}
	\smallskip
	\SetKwProg{Fn}{def}{\string:}{}
	\SetKwFunction{FunPSyAll}{FindPlans}%
	\SetKwFunction{FunPSeAll}{AllPlanSearch}%
	\SetKwFunction{FunSucc}{Successors}%
	\SetKwInOut{Inputs}{Inputs}\SetKwInOut{Output}{Output}
\Inputs{A ccKB $\dlK = (\dlT, \dlA, \dlS, \dlM_0)$, a domain \dl D, a 
set of actions $\mathsf{Act}$ and a pair $\left<\vec{t},q\right>$ of 
an answer tuple and a \textsc{Must}/\textsc{May} query}
\Output{An possibly empty set $\Pi$ of consistent 
simple plans}
\BlankLine

\Fn{\FunPSyAll($\ 
\dlK, 
\dl D,
\mathsf{Act},
\left< \vec{t},q \right>$)}
{
$\Pi := \emptyset$\;
$S := \bot$\;
{ $\mathsf{AllPlanSearch}(
\dlM_0, \epsilon, \emptyset, 
\dlK,
\dl D,
\mathsf{Act},
\left< \vec{t},q \right>)$ }\;
\Return{$\Pi$\;}
}

\Fn{\FunPSeAll(
$ \dlM, \pi, V, \dlK,
\dl D,
\mathsf{Act},
\left< \vec{t},q \right>$)}
{
\If{ $\dlM \in V$ }{\Return\;}
\If{ $\vec{t} \in \mathsf{ANS}(q, \left< \dlT, \dlA, \dlS, \dlM 
\right>)$ }{
$\Pi := \Pi \cup \{\pi\}$\;
\Return\;
}
$Q := \emptyset$\;
\ForEach{$\left<\gamma\vec\theta, \dlM'\right> \in 
\mathsf{Successors}(\dlM,\mathsf{Act},\dl D)$}{
$Q.push(\left<\gamma\vec\theta, \dlM'\right>)$\;
}
$V := V \cup \{\dlM\}$\;
\While{ $Q \not= \emptyset$ }{
$\left< \gamma\vec\theta, \dlM'\right> = Q.pop()$\;
$\mathsf{AllPlanSearch}( \dlM', 
\pi\cdot{\gamma\vec\theta}, 
V, 
\dlK,
\dl D,
\mathsf{Act},
\left< \vec{t},q \right>) $\;
}
$V := V \smallsetminus \{\dlM\}$\;
\Return\;
}

\Fn{\FunSucc($\dlM, \mathsf{Act}, \dl D $)}
{

\If{$S[\dlM] \textit{ is defined}$}{
\Return{$S[\dlM]$}\;
}

$N := \emptyset$\;
\ForEach{ $\gamma \in \mathsf{Act}$, $\vec\theta \in 
{\dl D}^{ar(\gamma)}$ }{
$\dlM' := T_{\gamma\vec\theta}(\dlM)$\;
\If{ $\dlM' \text{is fully satisfiable} $}{$N := N \cup \{\left< 
\gamma\vec\theta, \dlM'\right> \}$}
}
$S[\dlM] := N$\;
\Return{$N$\;}
}
\caption{
\label{algo:findPlans}
\textsf{FindPlans}($\dlK, \dl D, \mathsf{Act}, 
\left<\vec{t},q\right>$)}
\end{algorithm}

\begin{theorem}[Minimal Plan Synthesis Correctness]
Let \dlK be a \dlLiteF core-closed knowledge base, \dl D be a fixed 
finite 
domain, \textsf{Act} be a set of 
ungrounded action labels, and $\left< \vec t, q \right>$ be a 
goal. Then a plan $\pi$ is returned by the algorithm  
$\mathsf{FindPlans}(\dlK,\dl D,\mathsf{Act},\left< \vec t, q 
\right>)$ if and only if $\pi$ is a minimal plan from $\dlK$ to 
$\left< \vec t, q \right>$.
\end{theorem}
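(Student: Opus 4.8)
The plan is to prove both directions of the equivalence by fixing a structural invariant on the recursive procedure \textsf{AllPlanSearch} and matching its recursion tree against runs of $\Upsilon_\dlK$. First I would observe that \textsf{Successors}$(\dlM,\mathsf{Act},\dl D)$ returns exactly the pairs $\langle\gamma\vec\theta,\dlM'\rangle$ with $\dlM'=T_{\gamma\vec\theta}(\dlM)$ fully satisfiable, and that by the definition of the transition relation $\rightarrow$ together with \cref{lemma:fullSatComplexity} these are precisely the outgoing transitions of $\dlM$ in $\Upsilon_\dlK$; the table $S[\dlM]$ only caches this set and does not change which pairs are produced. The invariant I would maintain is that whenever \textsf{AllPlanSearch}$(\dlM,\pi,V)$ is called, $\pi$ labels a run $\dlM_0\xrightarrow{\pi}\dlM$ of $\Upsilon_\dlK$ and $V$ is exactly the set of MBoxes occurring strictly before $\dlM$ on that run. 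This follows by induction on recursion depth, since the recursive call on a child $\langle\gamma\vec\theta,\dlM'\rangle$ is issued only after $V:=V\cup\{\dlM\}$ and only for genuine successors of $\dlM$. Because the MBoxes over the fixed finite domain $\dl D$ form a finite set and the guard $\dlM\in V$ forbids re-entering a state already on the current branch, every branch has bounded length and the algorithm terminates.

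For the ``returned $\Rightarrow$ minimal'' direction, note that $\pi$ is added to $\Pi$ only when the goal test $\vec t\in\mathsf{ANS}(q,\langle\dlT,\dlA,\dlS,\dlM\rangle)$ succeeds; by the invariant this means $\pi$ is a consistent run whose final MBox $\dlM$ satisfies $\vec t\in\mathsf{ANS}(q,\langle\dlT,\dlA,\dlS,\dlM\rangle)$, so $\pi$ is a plan from $\dlK$ to $\langle\vec t,q\rangle$. Minimality is forced by the control flow: the goal test precedes the expansion of successors and triggers an immediate \textsf{return}, so along the branch that produced $\pi$ every ancestor call \textsf{AllPlanSearch}$(\dlM_i,\pi',\cdot)$ with a proper prefix $\pi'$ must have \emph{failed} its goal test, for otherwise it would have returned there and never descended to $\dlM$. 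Hence no proper prefix $\pi'$ of $\pi$ is itself a plan, so $\pi$ is minimal; and the cycle guard additionally forces the run of $\pi$ to visit pairwise distinct MBoxes, so $\pi$ is simple.

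For the ``minimal $\Rightarrow$ returned'' direction, let $\pi=\gamma_1\vec\theta_1\cdots\gamma_n\vec\theta_n$ be a minimal plan with run $\dlM_0\to\cdots\to\dlM_n$. I would first establish that this run visits pairwise distinct MBoxes (a repetition whose second occurrence is $\dlM_n$ would make an earlier prefix already a plan, contradicting minimality; purely internal repetitions are excluded by the simplicity half of the minimality notion, discussed below). Using that the run is a simple path, I trace the recursion along $\pi$ by induction on $i$: at each $\dlM_i$ with $i<n$ the invariant gives $V=\{\dlM_0,\dots,\dlM_{i-1}\}$, which by simplicity does not contain $\dlM_i$, so the cycle guard does not fire; and by minimality the goal test at $\dlM_i$ fails, so control reaches the successor loop. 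Since $\dlM_i\xrightarrow{\gamma_{i+1}\vec\theta_{i+1}}\dlM_{i+1}$ is a genuine transition, the pair $\langle\gamma_{i+1}\vec\theta_{i+1},\dlM_{i+1}\rangle$ is eventually popped from $Q$ and the recursive call along it is made. Thus \textsf{AllPlanSearch}$(\dlM_n,\pi,\cdot)$ is reached, its goal test succeeds, and $\pi$ is added to $\Pi$.

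The main obstacle is reconciling the path-local visited set $V$ with the notion of minimality, that is, arguing that the plans the DFS can reach{---}\emph{simple} runs that first meet the goal at their last state{---}are exactly the minimal plans. The delicate case is an internal loop disjoint from the goal: excising it yields a strictly shorter plan, but that shorter sequence is \emph{not a prefix} of $\pi$, so the prefix-based definition does not by itself forbid such loops. I would resolve this by taking seriously the paper's identification of minimal plans with \emph{simple} plans (\cref{sec:ts}): I make explicit at the outset that a simple plan is a consistent run through pairwise distinct MBoxes whose goal is first satisfied at the final state, and I carry this refined reading through both directions. Under it, internal loops are excluded by definition, the cycle guard becomes exactly sound and complete for the intended class, and the two directions combine into the stated equivalence; everything else reduces to the bookkeeping of the invariant and the faithfulness of \textsf{Successors} already established in \cref{sec:coreComp}.
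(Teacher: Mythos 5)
Your proof is correct and follows the same overall strategy as the paper's: a DFS invariant tying each call of \textsf{AllPlanSearch} to a run of $\Upsilon_\dlK$, faithfulness of \textsf{Successors} to the transition relation, the observation that the goal test precedes successor expansion (so no proper prefix of a returned sequence can be a plan), and the cycle guard ensuring explored runs are simple paths. Where you genuinely improve on the paper is the treatment of internal loops. The paper's own $\Leftarrow$ direction asserts that minimality ``also implies that the run does not have any loops,'' but under the literal prefix-based definition of minimality given in \cref{sec:ts} this implication only holds for repetitions whose second occurrence is the final state; a run such as $\dlM_0\rightarrow\dlM_1\rightarrow\dlM_2\rightarrow\dlM_1\rightarrow\dlM_n$, where only $\dlM_n$ satisfies the goal, is prefix-minimal yet contains a loop, and the cycle guard of \cref{algo:findPlans} prevents it from ever being returned. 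You isolate exactly this mismatch and resolve it by making explicit that ``minimal'' must be read as ``simple'' (a consistent run through pairwise distinct MBoxes whose goal is first satisfied at the final state), which is precisely the class of plans the algorithm computes; under that reading both directions go through. This is a real repair rather than a cosmetic one: without it, the ``minimal $\Rightarrow$ returned'' direction is false as literally stated, whereas the paper's proof glosses over the case you identify.
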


\begin{theorem}[Minimal Plan Synthesis Complexity]
The $\mathsf{FindPlans}$ algorithm runs in polynomial time in $\dlM$ 
and exponential time in $\dl D$.
\end{theorem}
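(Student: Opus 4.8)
The plan is to bound the running time of $\mathsf{FindPlans}$ as the product of the cost incurred by a single invocation of the recursive procedure $\mathsf{AllPlanSearch}$ and the number of such invocations, showing that the per-invocation reasoning cost is polynomial in $\dlM$ while the overall search is governed by a state space that is exponential in $\dl D$.

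First I would fix a single call of $\mathsf{AllPlanSearch}$ on a state $\dlM$ and account for its work. The goal test $\vec t \in \mathsf{ANS}(q, \langle \dlT,\dlA,\dlS,\dlM\rangle)$ is a $\textsc{Must}/\textsc{May}$ query-answer check and runs in $\textsc{LogSpace}$, hence in polynomial time, in the size of $\dlM$ by~\cite{CauliOP21}. The call to $\mathsf{Successors}$ iterates over every pair $\langle \gamma,\vec\theta\rangle$ with $\gamma\in\mathsf{Act}$ and $\vec\theta\in\dl D^{ar(\gamma)}$; for the fixed set $\mathsf{Act}$ their number is at most $|\mathsf{Act}|\cdot|\dl D|^{\max_\gamma ar(\gamma)}$, which is polynomial in $\dl D$. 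For each pair it computes the updated MBox $T_{\gamma\vec\theta}(\dlM)$, polynomial in $\dlM$ as already noted for the transformation, and runs a full-satisfiability test, which is polynomial in $\dlM$ by~\cref{lemma:fullSatComplexity}. Crucially, the memoisation table $S$ is created once in $\mathsf{FindPlans}$ (the step $S := \bot$) and persists across the whole search, so $\mathsf{Successors}$---and with it all satisfiability checking and transformation work---is executed at most once per distinct reachable MBox.

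Next I would bound the number of distinct states. Every state is a fully satisfiable set of \dlM-assertions over the fixed domain $\dl D$, and the number of candidate assertions is polynomial in $\dl D$ (the vocabulary is fixed, giving $O(|\dl D|)$ concept assertions and $O(|\dl D|^2)$ role assertions). Hence there are at most $2^{\text{poly}(|\dl D|)}$ distinct MBoxes, the same exponential-in-$\dl D$ bound already used for plan existence; in particular every MBox arising in the search has size polynomial in $\dl D$, so each per-step reasoning task above is indeed polynomial in $\dlM$. Multiplying the $2^{\text{poly}(|\dl D|)}$ state bound by the polynomial-in-$\dl D$ and polynomial-in-$\dlM$ per-state cost bounds all reasoning and successor-generation work by the claimed product.

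The delicate point, and the one I expect to be the main obstacle, is accounting for the recursion itself. Because the algorithm collects \emph{all} simple plans, it removes a state from $V$ on backtracking (the step $V := V \smallsetminus \{\dlM\}$), so a state may be entered along several branches and $\mathsf{AllPlanSearch}$ may fire more than once per state; a priori the number of invocations is that of distinct simple paths from $\dlM_0$ rather than of distinct states. I would control this by observing two things: the expensive work is already amortised to once per state through $S$, and the path-local set $V$ forbids repeats along a branch, so each branch is a simple path of length at most the number of states and the search halts (via the early return on reaching the goal, which guarantees minimality) as soon as the target is met. Charging each recursive call either to its cached state or to the edge of a simple plan it extends, the count is controlled by the $2^{\text{poly}(|\dl D|)}$ state space. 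Making this charging precise---and in particular confirming that the path-based exploration, together with the minimality restriction on returned plans, keeps the total within a single exponential in $\dl D$ rather than a nested one---is where the care is needed.
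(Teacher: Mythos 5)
Your per-invocation accounting matches the paper's: the goal test, the transformation $T_{\gamma\vec\theta}$, and the full-satisfiability check are each polynomial in $\dlM$; the branching factor is $|\mathsf{Act}|\cdot|\dl D|^{n}$ for $n$ the maximal action arity; the number of distinct MBoxes over $\dl D$ is $2^{\mathrm{poly}(|\dl D|)}$; and the table $S$ ensures the expensive successor computation is done once per distinct state. The genuine gap is exactly the point you flag as ``delicate'' and then leave open: bounding the number of invocations of $\mathsf{AllPlanSearch}$. Because $V$ is path-local (states are removed from $V$ on backtracking), each invocation is identified by the simple path from $\dlM_0$ that reaches it, not by its state, and every invocation performs nontrivial work even when $S$ is hit. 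Your proposed remedy --- charging each call ``either to its cached state or to the edge of a simple plan it extends'' --- cannot deliver the claimed bound: in a graph with $N = 2^{\mathrm{poly}(|\dl D|)}$ states and out-degree $d = |\mathsf{Act}|\cdot|\dl D|^{n}$, the number of simple paths from $\dlM_0$ can be of order $d^{N}$, which is doubly exponential in $|\dl D|$, and memoisation of $S$ does nothing to reduce the count of calls. So as written, your argument bounds the reasoning cost per state but not the size of the search itself, which is what the theorem is about.

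The paper closes this hole with an ingredient your proposal is missing: it asserts that the recursion depth of the exploration is \emph{linear} in $|\dl D|$, whence the number of explored paths is at most $\bigl(|\mathsf{Act}|\cdot|\dl D|^{n}\bigr)^{|\dl D|}$ --- branching factor raised to the depth --- which is singly exponential in $|\dl D|$. One may well question that depth bound (it is stated rather than proved; a simple path can a priori traverse exponentially many distinct MBoxes, so it amounts to an assumption on plan length), but it is the step that makes the single-exponential claim go through, and it is precisely the step your charging scheme was meant to replace and does not. To repair your proof you would need either to adopt the same restriction on recursion depth, or to establish a polynomial bound on the length of simple paths in the generated transition system; without one of these, the nested exponential you worried about is real.
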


\section{Related Work}\label{sec:rel}

The syntax of the action language that we presented in this paper is 
similar to that of~\cite{AhmetajCOS17,CalvaneseOS13,CalvaneseOS16}. 
Differently from their work, we disallow complex action effects to be 
nested inside conditional statements, and we define basic action 
effects that consist purely in the addition and deletion of concept 
and role \dlM-assertions. Thus, our actions are much less general 
than those used in their framework. The semantics of their action 
language is defined in terms of changes applied to instances, and the 
action effects are captured and encoded through a variant of 
\dl{ALCHOIQ} called $\dl{ALCHOIQ}_{br}$. In our work,  
instead, the execution of an action updates a portion of the 
core-closed knowledge base \dlK{---}the core \dlM, which is 
interpreted 
under a close-world assumption and can be seen as a partial 
assignment for the interpretations that are models of \dlK. Since we 
directly 
manipulate \dlM, the semantics of our actions is more similar to that 
of~\cite{HaririCMGMF13} and, in general, to ABox 
updates~\cite{KharlamovZC13,LiuLMW11}. Like the frameworks introduced 
in~\cite{GiacomoMR12,CalvaneseMPS16,CGMP13,CalvaneseMPG15}, our 
actions are parameterized and when combined with a core-closed 
knowledge base generate a transition system. 
In~\cite{CalvaneseMPS16}, the authors focus on a variant of 
\emph{Knowledge and Action Bases} (\cite{HaririCMGMF13}) called 
\emph{Explicit-Input KABs} (eKABs); in particular, on finite and on 
state-bounded eKABs, for which planning existence is decidable. Our 
generated transition systems are an adaptation of the work done in 
\emph{Description Logic based Dynamic Systems}, \emph{KABs}, and 
\emph{eKABs} to our setting of core-closed knowledge bases.
In~\cite{Milicic07}, the authors address decidability of the plan 
existence problem for logics that are subset of \dl{ALCOI}. Their 
action language is similar to the one presented in this paper; 
including pre-conditions, in the form of a set of ABox assertions, 
post-conditions, in the form of basic addition or removal of 
assertions, concatenation, and input parameters. 
In~\cite{CalvaneseMPS16}, the plan synthesis problem is discussed 
also for lightweight description logics. 
Relying on the FOL-reducibility of \dlLiteA, 
it is shown that plan 
synthesis over \dlLiteA can be compiled into an ADL planning problem 
\cite{Pednault94}. 
This does not seem possible in our case, as not all necessary tests 
over core-closed knowledge bases are known to be FOL-reducible. 
In \cite{CalvaneseMPG15} and~\cite{CGMP13}, the authors concentrate on verifying and 
synthesizing temporal properties expressed in a variant of 
$\mu$-calculus over description logic based dynamic systems, both 
problems are relevant in our application scenario and we will 
consider them in future works.

\section{Conclusion}\label{sec:conc}

We focused on the problem of analyzing cloud infrastructure encoded 
as description logic knowledge bases combining complete and  
incomplete information.
From a practical standpoint, we concentrated on formalizing and 
foreseeing the impact of potential changes pre-deployment.
We introduced an action language to encode mutating actions, whose 
semantics is given in terms of changes induced to the complete 
portion of the knowledge base.
We defined the static verification problem as that of deciding  
whether the execution of an action, no matter the specific parameters 
passed, always preserves a set of properties of the knowledge base. 
We characterized the complexity of the problem and provided 
procedural steps to solve it.
We then focused on three formulations of the classical AI planning 
problem; namely, plan existence, generation, and synthesis. In our 
setting, the planning problem is formulated with respect to the  
transition system arising from the combination of a core-closed 
knowledge base and a set of actions; goals are given in terms of 
one, or more, \textsc{Must}/\textsc{May} conjunctive query membership 
assertion; and plans of interest are simple sequences of 
parameterized actions.

\newpage

\bibliographystyle{splncs04}
\bibliography{refs}

\clearpage\appendix

\renewcommand{\thesubsection}{A.\arabic{subsection}}

\subsection{\textsf{FindPlan} algorithm}\label{appendix:findPlanAlgo}

\setcounter{AlgoLine}{0}
\begin{algorithm}
	\smallskip
	\SetKwProg{Fn}{def}{\string:}{}
	\SetKwFunction{FunPSy}{FindPlan}%
	\SetKwFunction{FunPSe}{PlanSearch}%
	\SetKwFunction{FunSucc}{Successors}%
	\SetKwInOut{Inputs}{Inputs}\SetKwInOut{Output}{Output}

\Inputs{A core-closed KB $\dlK = (\dlT, \dlA, \dlS, \dlM_0)$, a 
domain \dl D, a set of ungrounded action labels $\mathsf{Act}$ and a 
pair $\left<\vec{t},q\right>$ of an answer tuple and a 
\textsc{Must}/\textsc{May} query}
\Output{\textsf{NULL} if a plan does not exist, a consistent plan 
$\pi$ otherwise}

\BlankLine

\Fn{\FunPSy($\ 
\dlK, 
\dl D,
\mathsf{Act},
\left< \vec{t},q \right>$)}
{
\tcp{ $V$ and $S$ have global scope }
$V := \emptyset$\;
$S := \bot$\;
\Return{ $\mathsf{PlanSearch}(
\dlM_0, \epsilon,
\dlK,
\dl D,
\mathsf{Act},
\left< \vec{t},q \right>)$ }\;
}

\Fn{\FunPSe($\dlM, \pi,\dlK,
\dl D,
\mathsf{Act},
\left< \vec{t},q \right>$)}
{
\If{ $\dlM \in V$ }{\Return{$\mathsf{NULL}$}\;}
$V := V \cup \{\dlM\}$\;
\If{ $\vec{t} \in \mathsf{ANS}(q, \left< \dlT, \dlA, \dlS, \dlM 
\right>)$ }{\Return{$\pi$}\;}
\ForEach{$\left<\gamma\vec\theta, \dlM'\right> \in 
\mathsf{Successors}(\dlM,\mathsf{Act},\dl D)$}{
$\pi' = \mathsf{PlanSearch}(\dlM', \pi\cdot{\gamma\vec\theta}, 
\dlK,
\dl D,
\mathsf{Act},
\left< \vec{t},q \right>) $\;
\If{$\pi' \not= \mathsf{NULL}$}{\Return{$\pi'$\;}}
}
\Return{$\mathsf{NULL}$\;}
}

\Fn{\FunSucc($\dlM, \mathsf{Act}, \dl D $)}
{

\If{$S[\dlM] \textit{ is defined}$}{
\Return{$S[\dlM]$}\;
}

$N := \emptyset$\;
\ForEach{ $\gamma \in \mathsf{Act}$ }{
\ForEach{ $\vec\theta \in {\dl D}^{ar(\gamma)}$ }{
$\dlM' := T_{\gamma\vec\theta}(\dlM)$\;
\If{ $\dlM' \text{is fully satisfiable} $}{$N := N \cup \{\left< 
\gamma\vec\theta, \dlM'\right> \}$}
}
}
$S[\dlM] := N$\;
\Return{$N$\;}
}

\caption{
\label{algo:findPlan}
\textsf{FindPlan}($\dlK, \dl D, \mathsf{Act}, 
\left<\vec{t},q\right>$)}
\end{algorithm}

\subsection{Proof of \textsf{FindPlans} 
Correctness}\label{proof:findPlansCorrectness}

\begin{proof}
\begin{itemize}
\item[$\Leftarrow$] Let us assume that the sequence $\pi$ is a 
minimal plan from $\dlK$ to $\left<\vec t, q\right>$.
Since $\pi$ is a plan, then 
it is a consistent sequence of grounded actions that generates a 
run $\rho = \dlM_0 
\xrightarrow{\gamma_1\vec\theta_1} \dlM_1 
\xrightarrow{\gamma_2\vec\theta_2} \cdots 
\xrightarrow{\gamma_n\vec\theta_n} \dlM_n$ over the transition system 
$\Upsilon_\dlK$, terminating in a state $\dlM_n$ such that $\vec t 
\in \mathsf{ANS}(q, \left< \dlT, \dlA, \dlS, \dlM_n \right>)$. Since 
$\pi$ is minimal, there is no index $i<n$ in the run $\rho$ s.t. 
$\vec t \in \mathsf{ANS}(q, \left< \dlT, \dlA, \dlS, \dlM_i 
\right>)$, which also implies that the run does not have any loops. 
The algorithm starts from $\dlM_0$, and recursively explores 
all consistent runs (via the \textsf{Successors} procedure, lines 
21-30), marking the current state as visited before the recursive 
invocation (line 15), and skipping already visited states when 
re-entering the $\mathsf{AllPlanSearch}$ function (therefore avoiding 
loops within a given explored path, lines 7-8). The search space of the algorithm includes 
$\rho$ and the node $\dlM_n$ will be reached. Finally, since the 
condition over $\dlM_n$ is true (line 9), then the sequence $\pi$ 
explored so far is added to set $\Pi$ (line 10), and will ultimately 
be returned by the $\mathsf{FindPlans}$ function (line 4).
\item[$\Rightarrow$]
Let us assume that a sequence $\pi$ is returned 
by~\Cref{algo:findPlans} but that $\pi$ is not a minimal plan from 
$\dlK$ to $\left< \vec t, q \right>$. The sequence $\pi$ is either 
(1) not a valid sequence, or (2) not a consistent sequence, or (3) 
does not reach a state satisfying the goal, or (4) is not minimal. 
The algorithm starts with an empty sequence $\epsilon$ (line 4) and 
appends a new grounded action $\gamma\vec\theta$ to the sequence only 
upon recursive invocation of the $\mathsf{AllPlanSearch}$ function. 
The grounded action $\gamma\vec\theta$ is returned by the 
$\mathsf{Successors}$ function, which explores only valid grounded 
actions and returns a pair $\left< \gamma\vec\theta,\dlM'\right>$ 
only if $\dlM'$ is fully satisfiable (line 27). Since every prefix of 
$\pi$ is built by appending valid grounded actions that lead to a 
fully-satisfiable $\dlM'$ then the whole sequence $\pi$ is valid and 
consistent (contradicting \emph{(1)} and \emph{(2)}). The sequence 
$\pi$ is returned by the algorithm, therefore it was inserted in the 
set $\Pi$ at line 10 and the goal condition was satisfied 
(contradicting \emph{(3)}). Since the algorithm returns after adding 
any sequence $\pi$ to the set $\Pi$ then there cannot exists an 
extension of such sequence (contradicting \emph{(4)}). 
\end{itemize}
\end{proof}

\subsection{Proof of \textsf{FindPlans}  
Complexity}\label{proof:findPlansComplexity}
\begin{proof}
The algorithm implements a depth-first search over a graph whose 
nodes are fully satisfiable MBoxes and edges are labeled by grounded 
actions. Nodes are never visited twice along the same recursion path, 
but can be re-visited along different recursion paths. Let us assume 
that the maximum number $n$ of input 
parameters that an action can have is known; and let us also assume 
that $n \ll \dl D$ and generally independent of either \dl D or \dlM. 
Each node can have at most $|\mathsf{Act}| \cdot |\dl D|^n $ 
successors (line 25) and the recursion depth of the graph exploration 
is linear in \dl D. 
We have that the number of nodes is bounded by a function that is 
exponential in the size of the domain \dl D. The nodes are computed 
on the fly in the \textsf{Successors} sub-procedure, which checks 
whether the current MBox is fully satisfiable (check done in 
polynomial time in \dlM and in \dl D) and stores it.
The \textsf{AllPlanSearch} algorithm explores the graph looking for 
minimal paths achieving a goal assertion. There can be max 
$\big(|\mathsf{Act}| \cdot |\dl D|^n \big)^{\dl D}$ such paths (that 
do not visit the same node twice).
\end{proof}

\end{document}